\newtheorem{prop}{Proposition}%[section]
\def\XXint#1#2#3{{\setbox0=\hbox{$#1{#2#3}{\int}$}
     \vcenter{\hbox{$#2#3$}}\kern-.5\wd0}}
\DeclareMathOperator{\taninv}{arctanh}
\newcommand{\bes}{\begin{equation*}}
\newcommand{\ees}{\end{equation*}}
\newcommand{\beq}{\begin{equation}}
\newcommand{\eeq}{\end{equation}}
\numberwithin{equation}{section}
\numberwithin{equation}{section}
\title{Exact solutions for analog Hawking effect in dielectric media}
\author{S. Trevisan$^{1}$, F. Belgiorno$^{4,5}$, and S.L. Cacciatori$^{2,3}$ }
\address{$^1$ Department of Information Engineering, Univerit\`a di Padova, Via Gradenigo 6/b, IT-35131 Padova, Italy. }
\address{$^2$ Department of Science and High Technology, Universit\`a dell'Insubria, Via Valleggio 11, IT-22100 Como, Italy}
\address{$^3$ INFN sezione di Milano, via Celoria 16, IT-20133 Milano, Italy}
\address{$^4$ Dipartimento di Matematica, Politecnico di Milano, Piazza Leonardo 32, IT-20133 Milano, Italy}
\address{$^5$ INdAM-GNFM}
\date{}
\begin{document}

\maketitle

\begin{abstract}
In the framework of the analog Hawking radiation for dielectric media, we analyze a toy-model 
and also the 2D reduction of the Hopfield model for a specific monotone and realistic profile for the refractive index. 
We are able to provide exact solutions, which do not require any weak dispersion approximation. The theory of Fuchsian 
ordinary differential equations is the basic tool for recovering exact solutions, which are rigoroulsy identified, and involve 
the so-called generalized hypergeometric functions $_4F_3(\alpha_1,\alpha_2,\alpha_3,\alpha_4;\beta_1,\beta_2,\beta_3;z)$. 
A complete set of connection formulas 
are available, both for the subcritical case and for the transcritical one, and also the Stokes phenomenon 
occurring in the problem is fully discussed. From the physical point of view, we focus on the problem of thermality. Under suitable 
conditions, the Hawking temperature is deduced, and we show that it is in fully agreement with the expression deduced in other frameworks 
under various approximations. 
\end{abstract}

%%%%%%%%%%%%%%%%%%%%%%%%%%%%%%%%%%%%%%%%%%%%%%%%%

\section{Introduction}

We are interested to focus our attention on analytical calculation of the analog Hawking effect in dielectric media, and in presence of dispersion. 
Analytical calculations for the analog Hawking effect, introduced in the seminal paper \cite{unruh} for non-dispersive media, also in the dispersive case have been largely discussed in literature, see e.g. the following (non-exhaustive) list of papers~\cite{brout,corley,himemoto,saida,Schutzhold-Unruh,balbinot,unruh-s,cpf,Leonhardt-Robertson,Coutant-prd,Coutant-und,Un-schu,petev,Coutant-thick,hopfield-hawking,linder,philbin-exact,hopfield-kerr,CoutantWeinfurtner,coutant-kdv,coutant-bdg,Konig}, and for weak dispersion and the 
transcritical case a rather general mathematical framework, able to encompass in an unified picture very relevant models even for the experiments 
\cite{rousseaux-first,faccio-prl,rubino-njp,weinfurtner-prl,rousseaux-book,weinfurtner-book,jeff-nature,rousseaux,denova,drori,rousseaux-cocurrent}, has been discussed in 
\cite{master,bec,belcaccia-universe}. 
In \cite{PaperoVariazionale}, the authors introduced a new mathematical perspective in the analog Hawking effect, by relating the problem to the solution of a fourth-order Fuchsian equation for the subcritical case. 
As a remarkable example of the possibilities offered by Fuchsian equations, we provide here an \emph{exact} solution of a particular scattering problem inside a dielectric. 
We choose a monotonic refractive index profile traveling inside the dielectric at a fixed velocity. This kind of background %is not realistic since it is not localized like instead happens for laser pulses used experiments, but it might be a good model for \emph{one side} of the pulse. Moreover, 
represents a fundamental setting for a good understanding of the pair-creation process, and monotonic backgrounds are often considered in analytical computations in transcritical regime and in numerical simulations of analog systems ( see \emph{e.g.} \cite{Parentani_numerical}). We stress that exact solutions for the dispersive case of the analog Hawking effect are very hard to be obtained, as even solutions of ordinary differential equations of the fourth order. The only other example we can find is contained in the paper by Philbin \cite{philbin-exact}, which provides exact solutions for the Corley model \cite{corley,cpf}. Therein solutions are obtained at the price of introducing an interesting but somehow unrealistic linear velocity profile $v(x)=-\alpha x$.\\

The plan of the paper is the following. In Section \ref{sec_Exact_background} 
we present the model and the monotonic background we are going to consider. In Section \ref{sec_FuchsianEqPsi}, starting from the equation of motion for our model, we show 
that, by means of a suitable change of the independent spatial variable in the comoving frame of the pulse, we are able to obtain a fourth order Fuchsian equation. We provide a detailed characterization of its local monodromy and spectral type. In Section \ref{sec_ExactSolution} we provide the exact solution, and recover rigorously that the 
generalized hypergeometric functions $_4F_3(\alpha_1,\alpha_2,\alpha_3,\alpha_4;\beta_1,\beta_2,\beta_3;z)$ are involved. Furthermore, we provide a study of the Stokes 
phenomenon, and we study some physical consequences for the scattering problem at hand. In Section \ref{original-fipsi}, we consider a generalization of the previous analysis to the case of the original model (the so-called $\phi\psi$-model) to which the Hopfield model reduces in the 2D case, and again we consider the scattering problem and the thermality of the spectrum, which is recovered to coincide with the one deduced in the weak dispersion limit discussed in \cite{master} under suitable conditions. In Section \ref{conclusions}, we 
summarize our achievements and display future perpectives for our analysis.

%%%%%%%%%%%%%%%%%%%%%%%%%%%%%%%%%%%%%%%%%%%%%%%%%%%%%%%%%%%%%%%%%%%%%%%%
\section{The Cauchy model and the choice of background}
\label{sec_Exact_background}

The model we consider is the modified $\phi$-$\psi$ model (or ``Cauchy model'') introduced by the authors in \cite{PaperoVariazionale}. In the laboratory frame, it is expressed by the Lagrangian
\begin{align}\label{Eq_Lagrangian}
\mathcal{L} = \frac 1 2 (\partial_{t_l}\phi)^2 + \frac 1 2 \left((\partial_{t_l} \psi)^2 + \mu^2\psi^2\right) + g \phi \partial_{x_l} \psi - \frac \lambda {4!} \psi^4 \,.
\end{align}
The linearized EOMs around a background solution $\psi_B$, in the lab frame, are
\begin{align} \label{EOM_LinCauchy}
&\partial_{t_l}^2 \phi- g \partial_{x_l}\psi=0,\\
&\partial_{t_l}^2 \psi+ g \partial_{x_l} \psi-\mu^2 \phi+\frac{\lambda}{2} \psi_B^2 \psi=0. \label{EOM_LinCauchy2}
\end{align}
The free-field solutions (for $\lambda=0$) are plane waves $e^{i\omega_{lab} t_l - i k_{lab} x_l}$ which satisfy
\begin{align}\label{CauchyDisp}
n_0^2(\omega_{lab}):=\frac{k_{lab}^2}{\omega_{lab}^2} = \frac {\mu^2}{g^2} + \frac{\omega_{lab}^2}{g^2} =: A + B\omega_{lab}^2 \,.
\end{align}
The dispersion relation, in a reference frame moving with velocity $V$ with respect to the lab, has four solutions (see Figure \ref{fig_DR}): expanding $k(\omega)$ for $\omega\rightarrow 0$, the four modes have the following expressions
{\small
	\begin{align}\label{Eq_kH_lowW}
	k_H &= \frac{\mu - g V}{g - \mu V}\,\omega + O(\omega^3), \\
	k_B &= -\frac{\mu + g V}{g + \mu V}\,\omega + O(\omega^3), \\
	k_P &=  \frac{\sqrt{g^2 - \mu^2 V^2}}{
		\gamma V^2 } - \left(\frac 1 V + \frac {g^2} {\gamma^2 V (g^2 - \mu^2 V^2)}\right) \,\omega  - \frac{g^2 (2 g^2 + \mu^2 V^2) }{2\gamma  (g^2 - \mu^2 V^2)^{5/2} }\,\omega^2 + O(\omega^3) , \\
	k_N &=  -\frac{\sqrt{g^2 - \mu^2 V^2}}{
		\gamma V^2 } - \left(\frac 1 V + \frac {g^2} {\gamma^2 V (g^2 - \mu^2 V^2)}\right) \,\omega  + \frac{g^2 (2 g^2 + \mu^2 V^2) }{2\gamma  (g^2 - \mu^2 V^2)^{5/2} }\,\omega^2 + O(\omega^3)\,.\label{Eq_kN_lowW}
	\end{align}}
\begin{figure}
	\centering
	\includegraphics[scale=0.45]{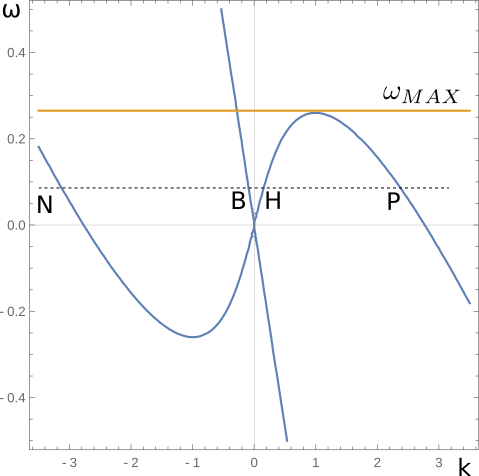}
	\caption{  The dispersion relation (\ref{CauchyDisp}) represented in the comoving frame with the background, with $g=1$, $\mu=1.2$; for $0<\omega<\omega_{MAX}$ there are four real solutions. }
	\label{fig_DR}
\end{figure}

The reason of the choice of the model lies in  the simple expression of the dispersion relation: however, in many cases we will refer to $\text{DR}(k)$ as a generic fourth order polynomial, so the choice of a particular dispersion relation is not really crucial. We are going to solve the linearized equations (\ref{EOM_LinCauchy})-(\ref{EOM_LinCauchy2}) with a particular choice of  background. 
In the experiments the background field is represented by a laser pulse, which is naturally \emph{localized} and travels rigidly at a certain velocity $V$. We will instead consider a monotonic background 
\begin{align}
\psi_b(x-Vt)^2 = 1 - \tanh(\beta (x-V t)) \,.
\end{align} 
We claim that this is a good model for the right-side of a laser pulse; moreover, a monotonic background represents a better model for an event horizon and allows to better understand the nature of Hawking Radiation. Monotonic backgrounds of this type were also used in some previous studies of analog black holes and white holes (see for example \cite{ Parentani_numerical}). 

The linearized equations (\ref{EOM_LinCauchy}) and (\ref{EOM_LinCauchy2}) can be put together to a fourth-order equation of generalized Orr-Sommerfeld type. %, by the same procedure that lead to Eq. (\ref{Eq_FourthOrder}).
It is convenient to write these equations in the comoving coordinates $t=\gamma (t_l - V x_l)$, $x=\gamma (x_l - V t_l)$. Since the potential term is independent of the comoving time, we seek for a solution in the form $\psi = e^{-i\omega t} f(x)$, $\phi = e^{-i\omega t} g(x)$. By applying $(v^\mu\partial_\mu)^2$ to the second equation, we obtain a single fourth order equation for $f(x)$ only:
\begin{align}\label{Eq_FourthOrder}
&0 = V^4 \gamma^4 f^{(4)}(x) +4 i V^3 \omega \gamma^4 f^{(3)}(x) + \cr 
&+\frac{1}{2} \gamma^2
f''(x) \left(-\lambda V^2 \tanh (\tilde\beta x)+2 g^2+\lambda V^2-2 \mu^2 V^2-12 V^2 \omega^2 \gamma^2\right)+ \cr 
&+  i\gamma^2 V f'(x) \left(i \tilde\beta \lambda V \text{sech}^2(\tilde \beta x)+ \omega \left(-\lambda \tanh (\tilde \beta x)+2 g^2+\lambda-2 \mu^2-4 \omega^2 \gamma^2\right)\right) + \cr
&+ \frac{1}{2} \gamma^2 f(x) \left[\omega^2 \left(\lambda \tanh (\tilde \beta x)-2 g^2 V^2-\lambda+2
\mu^2+2 \omega^2 \gamma^2\right)\right.\cr 
&\left. +2 \tilde \beta \lambda V \text{sech}^2(\tilde \beta x) (\tilde \beta V \tanh (\tilde \beta x)-i \omega)\right] \,,
\end{align}
where $\tilde \beta=\frac{\beta} \gamma$.

%%%%%%%%%%%%%%%%%%%%%%%%%%%%%%%%%%%%%%%%%%%%%%%%%%%%%%%%%%%%%%%%%%%%%%%%%%
\section{Reduction to a Fuchsian equation: monodromy and Riemann scheme}\label{sec_FuchsianEqPsi}

As in \cite{PaperoVariazionale}, we perform the
following change of variables on Eq. (\ref{Eq_FourthOrder}),
\begin{align}
z&=-e^{2\tilde \beta x}\,,
\end{align} 
which implies
\begin{align}
\partial_x &= 2\tilde \beta \theta_z := 2\tilde \beta z \frac d {dz}\,.
\end{align}
By defining the rescaled parameters $G=\frac g {2\tilde \beta}$, $\Omega=\frac \omega {2\tilde \beta}$, $M=\frac {\mu} {2\tilde \beta}$ and $\Lambda=\frac {\lambda} {4\tilde \beta^2}$, we end up with the following equation:
\begin{align}\label{Eq_Monotonic_FuchsianEQ}
&0 = V^4 \gamma^4 z^4 f^{(4)}(z) +f^{(3)}(z) \left(6 V^4 \gamma^4 z^3-4 i V^3 \Omega \gamma^4
z^3\right) \cr  
&+f''(z) \left(G^2 \gamma^2 z^2-V^2 \gamma^2 z^2
\left(\frac{\Lambda}{z-1}+M^2\right) \right. \cr 
& \left. \qquad-6 V^2 \Omega^2 \gamma^4 z^2-12 i V^3 \Omega \gamma^4 z^2+7 V^4 \gamma^4 z^2\right) \cr
&+f'(z) \left(-2 i G^2 V \Omega \gamma^2 z+G^2 \gamma^2 z+V^2 \gamma^2 z \left(\frac{\Lambda}{(1-z)^2}-M^2\right)-2 i V \Omega \gamma^2 z \left(\frac{\Lambda}{(1-z)}-M^2\right) \right. \cr 
&\left. \qquad +\frac{2 \Lambda V^2
	\gamma^2 z^2}{(1-z)^2}-6 V^2 \Omega^2 \gamma^4 z-4 i V^3 \Omega \gamma^4 z+V^4 \gamma^4 z+4 i V \Omega^3 \gamma^4 z\right)
\cr 
&+f(z) \left(-G^2 V^2 \Omega^2 \gamma^2+\Omega^2 \gamma^2 \left(\frac{\Lambda}{1-z}+M^2\right) \right. \cr 
& \left. \qquad +\frac{2 \Lambda V^2 \gamma^2 z^2}{(1-z)^3}+\frac{\Lambda
	V^2 \gamma^2 z}{(1-z)^3}-\frac{2 i \Lambda V \Omega \gamma^2 z}{(1-z)^2}+\Omega^4 \gamma^4\right) \,.
\end{align} 
This equation is of Fuchsian type, with three singular points, $z=0,1,\infty$.

As an alternative form, we can write the EOMs (\ref{EOM_LinCauchy})-(\ref{EOM_LinCauchy2}) as a system of first order and perform the same change of variables as before. With this procedure we obtain the system
\begin{align}\label{Eq_Monotonic_system}
\frac{dU}{dz}= A(z) U\,,
\end{align}
were
\begin{align}
U &= \begin{pmatrix}
g(z) \\g'(z) \\f(z) \\f'(z)
\end{pmatrix}\\
A(z)&=\left(
\begin{array}{cccc}
0 & 1 & 0 & 0 \\
\frac{\Omega^2}{V^2 z^2} & -\frac{V^2 \gamma^2 z-2 i V \Omega \gamma^2 z}{V^2 \gamma^2 z^2} & \frac{i G \Omega}{V \gamma z^2} & -\frac{G}{V^2 \gamma z} \\
0 & 0 & 0 & 1 \\
-\frac{i G \Omega}{V \gamma z^2} & \frac{G}{V^2 \gamma z} & -\frac{-\frac{\Lambda}{1-z}-M^2-\Omega^2 \gamma^2}{V^2 \gamma^2 z^2} & -\frac{V^2 \gamma^2 z-2 i V \Omega \gamma^2 z}{V^2 \gamma^2 z^2} \\
\end{array}
\right)\,.
\end{align}
We can reduce (\ref{Eq_Monotonic_system}) to a \emph{Fuchsian System of Normal Form} \cite{Haraoka} by changing variable to 
\begin{align}
Y(z) &= P(z)U(z)\,, \\
P(z) &= \text{diag}\left(\frac 1 z , 1 , \frac 1 z , 1 \right)\,.
\end{align}
The system has now the form
\begin{align}
\frac{dY}{dz}&=\left(\frac{A_1}{z} + \frac{A_2}{z-1}\right)Y \,,\\
A_1 &= \left(
\begin{array}{cccc}
-1 & 1 & 0 & 0 \\
\frac{\Omega^2}{V^2} & \frac{-V-2 i \Omega}{V} & -\frac{i G \Omega}{V \gamma} & -\frac{G}{V^2 \gamma} \\
0 & 0 & -1 & 1 \\
\frac{i G \Omega}{V \gamma} & \frac{G}{V^2 \gamma} & \frac{\Lambda+M^2+\Omega^2 \gamma^2}{V^2 \gamma^2} & \frac{-V-2 i \Omega}{V} \\
\end{array}
\right)\,, \\
A_2 &= \left(
\begin{array}{cccc}
0 & 0 & 0 & 0 \\
0 & 0 & 0 & 0 \\
0 & 0 & 0 & 0 \\
0 & 0 & -\frac{\Lambda}{V^2 \gamma^2} & 0 \\
\end{array}
\right) \,.
\end{align}
The matrices $A_1$ and $A_2$ are constant and they are, respectively, the residue at the simple poles $z=0$ and $z=1$. We may also define
\begin{align}
A_0:=-A_1 - A_2 \,,
\end{align}
which corresponds to the residue at the simple pole $z=\infty$. 

%%%%%%%%%%%%%%%%%%%%%%%%%%%%%%%%%%%%%%%%%%%%%%%%%%
\subsection{The local solutions and monodromy}
\label{sec_Monotonic_localsol}

We start looking for local solutions of (\ref{Eq_Monotonic_FuchsianEQ}) around $z=\infty$. After changing variable to $t=1/z$, we can look for a solution in the form 
\begin{align}\label{Eq_FormalSolution_inf}
f(t) = t^{-i \alpha} \sum_{n=0}^{\infty} c_n t^n \,.
\end{align}
The characteristic equation for the exponent $k:=2\tilde \beta \alpha$ is
\begin{align}\label{Eq_CharEq_inf}
\text{DR}(k)&:= 
\gamma^2 (\mu^2 (k V + \omega)^2 - g^2 (k + V \omega)^2 + (k V + \omega)^4 \gamma^2)=0\,,
\end{align}
which is nothing but the dispersion relation (\ref{CauchyDisp}) as written in the comoving reference frame with the background. Equation (\ref{Eq_CharEq_inf}) as in general four distinct complex solutions, so we find four independent solutions in the form (\ref{Eq_FormalSolution_inf}): the spectral type of the equation at $z=\infty$ is thus $(1111)$.  
There is the  possibility of  emergence of a resonant case, where the difference between eigenvalues is an integer, 
still in a zero measure set in the space of available parameters appearing in our model. 

A similar behaviour is found at $z=0$: we find four independent local solutions of the form
\begin{align}\label{Eq_FormalSolution_zero}
f(z) = z^{i \tilde\alpha} \sum_{n=0}^{\infty} c_n z^n \,,
\end{align}
where $\tilde k :=2\tilde \beta \tilde\alpha$ satisfies 
\begin{align}\label{Eq_CharEq_zero}
\text{DR}_0(k) & :=
\gamma^2 (-g^2 (\tilde k + V \omega)^2 + (\tilde k V + \omega)^2 (\lambda + \mu^2 + (\tilde k V + \omega)^2 \gamma^2))=0 \,.
\end{align}
Eq. (\ref{Eq_CharEq_zero}) is equivalent to (\ref{Eq_CharEq_inf}) if one maps $\mu^2 \mapsto \mu^2 + \lambda$. The spectral type at $z=0$ is again $(1111)$,  again almost everywhere in the space of available parameters appearing in our model. Some interesting formulas concerning \eqref{Eq_CharEq_inf} and \eqref{Eq_CharEq_zero} are discussed in Appendix \ref{useful-relations}.

The situation at $z=1$ is different. After defining $y=z-1$, the characteristic equation for solutions of the form
\begin{align}\label{Eq_FormalSolution_1}
f(y) = y^{a} \sum_{n=0}^{\infty} c_n y^n 
\end{align}
has four integer solutions $a=0,1,2,3$.  This situation is known in literature as the resonant case (cf. e.g. \cite{Haraoka}), and requires 
a particular study. We refer mostly to \cite{coddington}, where still the discussion is left incomplete, and, particularly, to the thoroughful analysis appearing 
in \cite{forsythe4}, and also to \cite{craig}. As suggested in the aforementioned literature, we apply the so-called Frobenius method for the analysis 
of solutions at a Fuchsian singularity, also in the resonant case, and we can also verify if there are logarithmic contributions (even in the resonant case, 
they might also not appear).

By means of the Frobenius method, we obtain three independent integer solutions
\begin{align}\label{Eq_localsol_1}
u_1(y) &= y^3+y^4 \left[ -\frac{3}{2}+\frac{\Lambda}{12 \gamma^2 V^2}+i\frac{\Omega}{V}\right]  + o(y^4)\\
u_2(y) &= y^2 +y^3 \left[ -2+\frac{\Lambda}{6 \gamma^2 V^2}+i\frac{4\Omega}{3V}\right] + o(y^3) \\
u_3(y)&= y\; +y^2 \left[-3+\frac{\Lambda}{2 \gamma^2 V^2}+i\frac{2\Omega}{V}\right] + o(y^2) \,,
\end{align}
and one logarithmic solution 
\begin{align}\label{Eq_localsol_4}
&u_0(y) =1+y \left[ -6-\frac{\Lambda}{V^2 \gamma^2}+\frac{4 i \Omega}{V}\right]  +o(y)\cr 
& +\log (y) \Big(R_1 u_1(y)+R_2 u_2(y)+R_3 u_3(y)\Big) \,,
\end{align} 

where  
	\begin{align}
	R_3 &= \frac{\Lambda}{V^2 \gamma^2} \\
	R_2 &= \frac{\Lambda \left(5 V-2 i \Omega\right)}{4 V^3 \gamma^2} \\
	R_1 &= \frac{\Lambda \left(9 V^2-6 i \Omega V-\Omega^2\right)}{18 V^4 \gamma^2} \,.
	\end{align}
	
The study  of the monodromy of the solution is important for the characterization of the equation \cite{Haraoka}. Starting from a basis of solutions $(u_1(\bar z),...,u_4(\bar z))$ evaluated at some $\bar z\in \mathbb C$, we can prolong these solutions along a path that goes around a singular point $a \in \mathbb C$ and closes back to $\bar z$ (without enclosing other singular points): the new vector $(u'_1(\bar z),...,u'_4(\bar z))$ that results from this transformation  is related to the initial one by a matrix $M_a$. Such matrix is independent on the point $\bar z$ and is called the \emph{monodromy} matrix of the solutions $(u_1,...,u_4)$ at the point $a$. 
The monodromy matrix of the solutions $(u_0(y),u_1(y),u_2(y),u_3(y))$ of (\ref{Eq_localsol_1})-(\ref{Eq_localsol_4}) at $z=1$ is easily computed as
\begin{align}
M_1 = \left(
\begin{array}{cccc}
1 & 2 \pi i  R_1 & 2  \pi i R_2 & 2 \pi i R_3 \\
0 & 1 & 0 & 0 \\
0 & 0 & 1 & 0 \\
0 & 0 & 0 & 1 \\
\end{array}
\right) \,,
\end{align}
whose Jordan form is 
\begin{align}
J_{M_1} = \left(
\begin{array}{cccc}
1 & 0 & 0 & 0 \\
0 & 1 & 0 & 0 \\
0 & 0 & 1 & 1 \\
0 & 0 & 0 & 1 \\
\end{array}
\right)\,.
\end{align}
The monodromy at $z=0$ and $z=\infty$ are even more easy to determine and they are represented respectively by the diagonal matrices
\begin{align}
M_0 &= \text{diag}\left(e^{i2\pi \tilde \alpha_1},e^{i2\pi \tilde \alpha_2},e^{i2\pi \tilde \alpha_3},e^{i2\pi \tilde \alpha_4}\right)\,,\\
M_\infty &= \text{diag}\left(e^{-i2\pi  \alpha_1},e^{-i2\pi  \alpha_2},e^{-i2\pi \alpha_3},e^{-i2\pi \alpha_4}\right)\,,
\end{align}
whose Jordan form is $J_{M_0}=J_{M_\infty}= \mathbb I_4$.
From the Jordan form we can infer that the spectral type of the equation at $z=1$ is $(3,1)$ (see \cite{Haraoka}). 
The spectral type of the equation is thus $((1111),(31),(1111))$: this spectral type is classified as \emph{rigid}. Without entering into mathematical details, an equation is called rigid if the local monodromy class of its solutions uniquely determines also the \emph{global} monodromy class. Another way of expressing the same concept is that the equation only depends on its \emph{local data} (\emph{i.e.} the characteristic exponents) and there are no \emph{accessory parameters}. 
We can also calculate the so-called \emph{index of rigidity} \cite{Haraoka}:
\begin{align}
\iota = \sum_{j=0}^p \text{dim}Z(M_j) - (p-1)n^2 \,,
\end{align}
where $Z(M_j)$ is the centralizer of the matrix $M_j$ (\emph{i.e.} the dimension of the vector space of matrices that commute with $M_j$) and $p+1$ is the number of distinct singular points of the equation.  A known result says that a Fuchsian system is rigid if and only if $\iota=2$. In our case, a simple computation leads to
\begin{align}
Z(M_0)+Z(M_1)+Z(M_\infty)-(4)^2 = 4+10+4-16 = 2\,.
\end{align}

Rigid equations have thus a simple structure and there are many results available for their characterization and solution. 
The rigidity of the equation allows in principle to find integral representations of the solutions and write exact expressions of connection coefficients for the local solutions at the different singular points: for the physical problem of the scattering. 
Another consequence of rigidity is that the local monodromy classes uniquely determine the global monodromy: this is interesting for physics, since the action of the global monodromy can be interpreted as the result of the scattering of a wave, so the scattering coefficients may be derived from the monodromy matrices \cite{Castro,DaCunha}.

%%%%%%%%%%%%%%%%%%%%%%%%%%%%%%%%%%%%%%%%%%%%%%%%%%%%%%%%%%%%%%%%%%%%%%%
\subsection{Gauge Transformation and Riemann scheme}

We start again from Eq. (\ref{Eq_Monotonic_FuchsianEQ})  and we perform a so-called gauge transformation in order to put to zero one of the characteristic exponents. We look for a solution of the form
\begin{align}
f(z):= z^{i \frac {\tilde  k_1}{2 \tilde \beta}}(z-1)\, u(z)\,,
\end{align}
where $\tilde k_1$ satisfies $$\text{DR}_0(\tilde k_1)=0\,.$$ The exponent of $(z-1)$ was chosen to lower the order of the singularities at $z=1$.  The function $u(z)$ now satisfies 
{\footnotesize
	\begin{align}\label{Eq_FuchsianEQ_gauge2}
	u(z)& \Big[\gamma^2 \left(G^2 (\tilde k_1+V \Omega)^2-(\tilde k_1 V+\Omega)^2 \left(\gamma^2 (\tilde k_1
	V+\Omega)^2+\Lambda+M^2\right)\right)\cr &+\gamma^2 z \left(-G^2 (\tilde k_1+V \Omega-i)^2+M^2 (\Omega+(\tilde k_1-i) V)^2+\gamma^2 (\Omega+(\tilde k_1-i)
	V)^4\right)\Big] + \cr
	u'(z)& \Big[\gamma^2 z \Big(V (2 i k_1 V+V+2 i \Omega) \left(-i \gamma^2 ((1+i) k_1 V-i V+(1+i) \Omega) ((1+i) k_1
	V+V+(1+i) \Omega)+L+M^2\right)\cr 
	&\qquad-i G^2 (2 k_1+2 V \Omega-i)\Big)\cr 
	&+\gamma^2
	z^2 \Big(-i V (2 \Omega+(2 k_1-3 i) V)
	\left(M^2+\gamma^2 \left((-5+2 k_1 (k_1-3 i)) V^2+2 (2 k_1-3 i) V \Omega+2 \Omega^2\right)\right)\cr 
	&\qquad 2 i G^2 k_1+2 i G^2 V \Omega+3 G^2\Big)\Big]+ \cr 
	u''(z) &\Big[\gamma^2 z^2 \left(-G^2+6 \tilde k_1^2 V^4 \gamma^2+12 \tilde k_1 V^3 \Omega \gamma^2-12 i \tilde k_1 V^4 \gamma^2+\Lambda V^2+M^2 V^2+6 V^2 \Omega^2
	\gamma^2-12 i V^3 \Omega \gamma^2-7 V^4 \gamma^2\right)\cr & +\gamma^2 z^3 \left(G^2-6 \tilde k_1^2 V^4 \gamma^2-12 (\tilde k_1-2 i) V^3 \Omega \gamma^2+24 i \tilde k_1
	V^4 \gamma^2-M^2 V^2-6 V^2 \Omega^2 \gamma^2+25 V^4 \gamma^2\right)\Big]+ \cr 
	u^{(3)}(z) &\Big[ \gamma^2 z^4 \left(4 i \tilde k_1 V^4
	\gamma^2+4 i V^3 \Omega \gamma^2+10 V^4 \gamma^2\right)+\gamma^2 z^3 \left(-4 i \tilde k_1 V^4 \gamma^2-4 i V^3 \Omega \gamma^2-6 V^4 \gamma^2\right)\Big]+ \cr 
	u^{(4)}(z) & z^4 (z-1) V^4 \gamma^4 =0 \,.
	\end{align}  
}

The last equation can be written in a more convenient form using (\ref{Eq_CharEq_inf}) and (\ref{Eq_CharEq_zero}):
{\small 
	\begin{align}\label{Eq_FuchsianEQ_DR}
	u(z)& \big[-\text{DR}_0(\tilde k_1)+z \text{DR}(\tilde k_1-i)\big] + \cr 
	u'(z)& \Big[z (\text{DR}_0(\tilde k_1)-\text{DR}_0(\tilde k_1-i))-z^2 (\text{DR}(\tilde k_1-i)-\text{DR}(\tilde k_1-2
	i))\Big] +\cr 
	u''(z) & \Big[ \frac{1}{2} z^3 (\text{DR}(\tilde k_1-i)-2 \text{DR}(\tilde k_1-2 i)+\text{DR}(\tilde k_1-3
	i)) \cr 
	&-\frac{1}{2} z^2 (\text{DR}_0(\tilde k_1)-2 \text{DR}_0(\tilde k_1-i)+\text{DR}_0(\tilde k_1-2 i))\Big] +\cr 
	u^{(3)}(z) &
	\Big[\frac{1}{6} z^3 (\text{DR}_0(\tilde k_1)-3 \text{DR}_0(\tilde k_1-i)+3 \text{DR}_0(\tilde k_1-2 i)-\text{DR}_0(\tilde k_1-3
	i))\cr 
	&-\frac{1}{6} z^4 (\text{DR}(\tilde k_1-i)-3 \text{DR}(\tilde k_1-2 i)+3 \text{DR}(\tilde k_1-3 i)-\text{DR}(\tilde k_1-4
	i))\Big] +\cr 
	u^{(4)}(z) & \Big[ \frac{1}{24} z^5 (\text{DR}(\tilde k_1-i)-4 \text{DR}(\tilde k_1-2 i)+6 \text{DR}(\tilde k_1-3
	i)-4 \text{DR}(\tilde k_1-4 i)+\text{DR}(\tilde k_1-5 i))\cr 
	-\frac{1}{24}& z^4 (\text{DR}_0(\tilde k_1)-4
	\text{DR}_0(\tilde k_1-i)+6 \text{DR}_0(\tilde k_1-2 i)-4 \text{DR}_0(\tilde k_1-3 i)+\text{DR}_0(\tilde k_1-4 i))\Big] =0\,.\cr
	\end{align}}
where $\text{DR}_0(\tilde k_j)=0$ and $\text{DR}( k_j)=0$.

It is easy to verify, by studying the local solutions as in Section \ref{sec_Monotonic_localsol}, that the characteristic exponents of Eq.~(\ref{Eq_FuchsianEQ_DR}) are 
\begin{align}\label{Eq_Monotonic_RiemannScheme}
\left[\begin{matrix}
z=0 & z=1 & z=\infty \\
0	& 0	& 1-\frac i {2\tilde \beta}( k_1 -  \tilde k_1)\\
\frac i {2\tilde \beta}(\tilde k_2 -  \tilde k_1)	& 1	& 1-\frac i {2\tilde \beta}( k_2 - \tilde k_1) \\
\frac i {2\tilde \beta}(\tilde k_3 - \tilde k_1)	& 2	& 1- \frac i {2\tilde \beta}(k_3 - \tilde k_1) \\
\frac i {2\tilde \beta}(\tilde k_4 - \tilde  k_1) & -1	& 1-\frac i {2\tilde \beta}(k_4 -\tilde k_1)
\end{matrix}\right]\,.
\end{align}
Eq. (\ref{Eq_Monotonic_RiemannScheme}) is the so-called Riemann Scheme of the equation\footnote{More precisely, the Riemann P-Scheme is usually written as
\begin{align}
\mathcal P\left\{\begin{matrix}
w' & w'' & w''' \\
a_1	& b_1	& c_1\\
a_2	& b_2	& c_2\\
a_3	& b_3	& c_3\\
a_4	& b_4	& c_4
\end{matrix}\ ; z\right\}\,,
\end{align}
and indicates independently the equations and the solutions. It is often also called Papperitz symbol, but it has been introduced by Riemann first.}.
By defining
\begin{align}\label{Eq_hypergeom_alpha}
\alpha_i &:= 1-\frac i {2 \tilde\beta}( k_i - \tilde k_1) \,,\quad i=1,2,3,4\\ \label{Eq_hypergeom_beta}
 \beta_j &:= 1 - \frac i {2 \tilde \beta}(\tilde k_{j+1} - \tilde k_1)\,,\quad j=1,2,3\,,
\end{align}
we can write Eq. (\ref{Eq_Monotonic_RiemannScheme}) as
\begin{align}
\left[\begin{matrix}
z=0 & z=1 & z=\infty \\
0	& 0	& \alpha_1 \\
1- \beta_1	& 1	& \alpha_2 \\
1- \beta_2	& 2	& \alpha_3 \\
1-\beta_3 & - \beta_4	& \alpha_4
\end{matrix}\right]\,.
\end{align}
which corresponds to  the Riemann scheme of the hypergeometric function $$_4F_3(\alpha_1,\alpha_2,\alpha_3,\alpha_4;\beta_1,\beta_2,\beta_3;z)$$ in the standard form \cite{Oshima}. The exponent $\beta_4$ in the hypergeometric function is defined by $\sum_{i=0}^4\alpha_i=\sum_{i=0}^4\beta_i$, and is indeed equal to 1.
Therefore, the spectral type and the Riemann scheme of our fourth order equation coincide with those of the hypergeometric function $_4F_3$. Since the system is rigid, Eq. (\ref{Eq_FuchsianEQ_gauge2}) has to be equivalent to the hypergeometric equation $_4E_3$ \cite{aomoto}, and $_4 F_3$ has to be a solution as we are now going to see.

%%%%%%%%%%%%%%%%%%%%%%%%%%%
\section{The exact solution: hypergeometric  $ \,_4F_3$, Stokes phenomenon and connection formulas}
\label{sec_ExactSolution}

We look for a locally holomorphic solution of Eq. (\ref{Eq_FuchsianEQ_DR}) around $z=0$
\begin{align}\label{Eq_SolutionExpansion}
u(z)=1 +c_1 z + c_2 z^2 + ...
\end{align}
We are going to prove the following proposition that gives the explicit expression of the coefficients $c_n$:

\begin{prop}
	\label{Th_Cn} Given any two fourth order polynomials $\text{DR}(k)$ and $\text{DR}_0(k)$, let $\tilde k_1$ be one of the roots of $\text{DR}_0$. Let $u(z)$ be a meromorphic function  which solves Eq. (\ref{Eq_FuchsianEQ_DR}) and suppose that $u(z)$ is locally holomorphic around $z=0$. Then, the general term of the series expansion (\ref{Eq_SolutionExpansion}) satisfies
	\begin{align}\label{Eq_Cn}
	c_n = \frac{\prod_{r=1}^{n}\text{DR}(\tilde k_1-r\,i )}{\prod_{s=1}^{n}\text{DR}_0(\tilde k_1-s \,i)}\,.
	\end{align}	
\end{prop}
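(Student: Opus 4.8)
The plan is to substitute the holomorphic ansatz (\ref{Eq_SolutionExpansion}), $u(z)=\sum_{n\ge 0}c_n z^n$ with $c_0=1$, directly into Eq.~(\ref{Eq_FuchsianEQ_DR}) and to read off the recursion satisfied by the $c_n$. The crucial first observation is that the bracketed coefficients multiplying $u,u',\dots,u^{(4)}$ are not arbitrary: each is a \emph{finite difference} of $\text{DR}$ or of $\text{DR}_0$ with step $-i$, weighted by binomial coefficients. Writing $(\Delta F)(k):=F(k)-F(k-i)$ and, more generally, $(\Delta^m F)(k)=\sum_{j=0}^m(-1)^j\binom{m}{j}F(k-ji)$, one checks by a finite term-by-term matching against (\ref{Eq_FuchsianEQ_DR}) that the whole equation collapses into the compact form
\begin{align*}
0=&-\sum_{m=0}^{4}\frac{(-1)^m}{m!}(\Delta^m\text{DR}_0)(\tilde k_1)\,z^m u^{(m)}(z)\\
&+z\sum_{m=0}^{4}\frac{(-1)^m}{m!}(\Delta^m\text{DR})(\tilde k_1-i)\,z^m u^{(m)}(z).
\end{align*}
Separating the two groups (one built from $\text{DR}_0$ at $\tilde k_1$, the other from $\text{DR}$ at $\tilde k_1-i$ and carrying one extra power of $z$) is the conceptual heart of the argument.

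Next I would exploit the Euler operator $\theta_z=z\,d/dz$, for which $z^m u^{(m)}=\theta_z(\theta_z-1)\cdots(\theta_z-m+1)u$, so that each term $z^m u^{(m)}$ sends $z^n\mapsto n(n-1)\cdots(n-m+1)\,z^n$ and \emph{preserves the degree}. Substituting the series and collecting the coefficient of $z^N$, the first sum contributes only through $c_N$ while the second, because of the extra factor $z$, contributes only through $c_{N-1}$. This yields a genuine \emph{two-term} recursion $c_N\,P_0(N)=c_{N-1}\,P(N-1)$, where
\begin{align*}
P_0(N)&=\sum_{m=0}^{4}\frac{(-1)^m}{m!}(\Delta^m\text{DR}_0)(\tilde k_1)\,(N)_m,\\
P(N-1)&=\sum_{m=0}^{4}\frac{(-1)^m}{m!}(\Delta^m\text{DR})(\tilde k_1-i)\,(N-1)_m,
\end{align*}
and $(x)_m:=x(x-1)\cdots(x-m+1)$ is the falling factorial, so that $(N)_m/m!=\binom{N}{m}$.

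The remaining step is to recognize $P_0$ and $P$ as Newton forward-difference interpolation formulas. Setting $h(s):=\text{DR}_0(\tilde k_1-si)$ and $\tilde h(s):=\text{DR}(\tilde k_1-i-si)$, a short sign computation gives $(-1)^m(\Delta^m\text{DR}_0)(\tilde k_1)=(\Delta_s^m h)(0)$ and likewise $(-1)^m(\Delta^m\text{DR})(\tilde k_1-i)=(\Delta_s^m\tilde h)(0)$, where $\Delta_s$ is the unit forward difference in $s$. Hence $P_0(N)=\sum_{m=0}^4\binom{N}{m}(\Delta_s^m h)(0)$ and $P(N-1)=\sum_{m=0}^4\binom{N-1}{m}(\Delta_s^m\tilde h)(0)$. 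Since $\text{DR}_0$ and $\text{DR}$ are fourth-order polynomials, $h$ and $\tilde h$ are polynomials of degree $4$ in $s$, so their Newton series terminate exactly at $m=4$ with no remainder and reproduce the polynomials themselves: $P_0(N)=h(N)=\text{DR}_0(\tilde k_1-Ni)$ and $P(N-1)=\tilde h(N-1)=\text{DR}(\tilde k_1-Ni)$. The recursion therefore reads $\text{DR}_0(\tilde k_1-Ni)\,c_N=\text{DR}(\tilde k_1-Ni)\,c_{N-1}$, and telescoping from $c_0=1$ yields precisely (\ref{Eq_Cn}).

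I expect the main obstacle to be organizational rather than conceptual: correctly matching every bracket of (\ref{Eq_FuchsianEQ_DR}) to the binomial/finite-difference pattern while keeping the alternating signs and the factorial normalizations straight, and being careful that the lone factor $z$ in the $\text{DR}$-group is exactly what shifts $c_N$ to $c_{N-1}$. A secondary point worth flagging is the role of the holomorphy hypothesis: the recursion determines $c_N$ from $c_{N-1}$ only when $P_0(N)=\text{DR}_0(\tilde k_1-Ni)\neq 0$ for every $N\ge 1$, i.e. when $\tilde k_1-Ni$ is never another root of $\text{DR}_0$. This is precisely the nonresonance (no-logarithm) condition already singled out in Section~\ref{sec_Monotonic_localsol}, under which the product (\ref{Eq_Cn}) is well defined.
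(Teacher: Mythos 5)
Your proof is correct, and it takes a genuinely different---and cleaner---route than the paper's. The paper argues by induction: it verifies $c_1,\dots,c_4$ by brute-force substitution, extracts a five-term recurrence $A_0c_n+\dots+A_4c_{n+4}=0$ with very long coefficients, and rests the inductive step on a polynomial identity that is checked ``by direct algebra or using Wolfram Mathematica.'' You instead read off the structure of Eq.~(\ref{Eq_FuchsianEQ_DR}): its coefficients are binomially weighted finite differences of $\text{DR}_0$ (based at $\tilde k_1$, carrying no extra power of $z$) and of $\text{DR}$ (based at $\tilde k_1-i$, carrying one extra power of $z$); the Euler-operator substitution then shows that the coefficient of $z^N$ involves only $c_N$ and $c_{N-1}$, and Newton's forward-difference formula---exact and terminating for quartics---collapses the two coefficient polynomials to $\text{DR}_0(\tilde k_1-Ni)$ and $\text{DR}(\tilde k_1-Ni)$, so (\ref{Eq_Cn}) follows by telescoping from $c_0=1$. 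I checked your two structural claims against the printed equation (the sign identity $(-1)^m(\Delta^m\text{DR}_0)(\tilde k_1)=(\Delta_s^m h)(0)$ and the degree bookkeeping in the Euler substitution), and they hold; for instance the coefficient of $z^2$ does reduce to $c_1\text{DR}(\tilde k_1-2i)-c_2\text{DR}_0(\tilde k_1-2i)$. Your route buys: no computer-algebra step, no induction beyond telescoping, an immediate generalization to polynomials of arbitrary degree $d$ (yielding the $_dF_{d-1}$ pattern), and a conceptual explanation of why $c_N/c_{N-1}$ is exactly a ratio of the two dispersion polynomials---which is the $_4F_3$ structure announced after the proposition. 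It also exposes a redundancy in the paper's own presentation: since the genuine recursion is two-term, the paper's five-term recurrence is the two-term relations recombined with binomial weights $(1,-3,3,-1)$ (its displayed low-order expansion matches exactly this recombination), which is why its closing identity is heavy enough to call for a CAS. Your final caveat is apt as well: the statement tacitly requires $\text{DR}_0(\tilde k_1-Ni)\neq 0$ for all $N\geq 1$, since otherwise the right-hand side of (\ref{Eq_Cn}) is undefined; this is the same nonresonance/genericity assumption made in Section \ref{sec_Monotonic_localsol}, under which no logarithmic solution arises at $z=0$.
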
 

\begin{proof}
See Appendix \ref{proposition-proof}.
	
\end{proof}

Using the definitions (\ref{Eq_hypergeom_alpha})-(\ref{Eq_hypergeom_beta})
and writing the dispersion relations in terms of their roots as in (\ref{Eq_DR_prod}) and (\ref{Eq_DR0_prod}), we easily find
\begin{align}\label{Eq_SeriesCoeff}
c_n = \frac{\prod_{i=1}^{4}\alpha_i(\alpha_i+1)...(\alpha_i+n) }{n!\prod_{j=1}^{3} \beta_j(\beta_j+1)...(\beta_j+n)} \,.
\end{align}
This is precisely the general term of the hypergeometric function $\,_4F_3$. So we can say that 
\begin{align}
u(z)=  \,_4F_3(\alpha_1,\alpha_2,\alpha_3,\alpha_4;\beta_1,\beta_2,\beta_3;z)
\end{align}
is an exact solution of (\ref{Eq_FuchsianEQ_gauge2}), and 
\begin{align}\label{Eq_ExatSolution}
f(z)= z^{i\frac{\tilde k_1}{2\tilde \beta}}(z-1) \,_4F_3(\alpha_1,\alpha_2,\alpha_3,\alpha_4;\beta_1,\beta_2,\beta_3;z)
\end{align}
is a solution of (\ref{Eq_Monotonic_FuchsianEQ}).
Solving the scattering problems now just amounts to writing the connection coefficients of the hypergeometric function between $z=0$ and $z=\infty$: for example, the connection coefficient $\tilde k_1 \rightarrow k_1$ is
\begin{align}\label{Eq_ConnCoeff}
C_{\tilde k_1\rightarrow k_1}=\frac{\Gamma (\beta_1) \Gamma (\beta_2) \Gamma (\beta_3) \Gamma (\alpha_2-\alpha_1) \Gamma (\alpha_3-\alpha_1) \Gamma
	(\alpha_4-\alpha_1)}{\Gamma (\alpha_2) \Gamma (\alpha_3) \Gamma (\alpha_4) \Gamma (\beta_1-\alpha_1) \Gamma (\beta_2-\alpha_1)
	\Gamma (\beta_3-\alpha_1)}\,.
\end{align}
Notice that in the generic case (excluding resonances) a basis of solution is automatically obtained replacing $\tilde k_1$ (and $k_1$) with any of the $\tilde k_j$ (and $k_j$). Indeed, the equation is invariant under permutation of the $j$s.  To be more explicit, we have the following basis of solutions: 
\beq
\left(f_1(z),f_2(z),f_3(z),f_4(z)\right), \label{basis}
\eeq
where $f_j (z)$, with $j>1$, are just obtained from \eqref{Eq_ExatSolution} by replacing $\tilde k_1$ (and $k_1$) with any of the $\tilde k_j$ (and $k_j$). As a consequence, 
we also obtain the general solution of our equation of motion as follows:
\beq\label{solgen}
F(z)=\sum_{i=1}^4 D_i f_i(z),
\eeq
where the constants $D_i$ have to be fixed according to the scattering process one is considering. It is remarkable that the basis is already diagonal in the $\tilde{k}_i$, 
in the sense that the physical modes on the left side (corresponding to $x\to -\infty$, see also the following subsection) are asymptotically represented by just the element of the 
basis with index $j$: $f_j (x) \propto e^{i \tilde k_j x}$ as $x\to -\infty$.\\
 Some physical considerations are mandatory. The aforementioned connection coefficients are responsible of the phenomenon of mode conversion in the scattering process, 
i.e. they show that, from passing from the left, i.e. at $x=-\infty$, with input mode $\tilde k_1$, to the right, i.e. $x=\infty$, with potential output modes $k_j$, $j=1,2,3,4$, 
the S-matrix is not, in general, diagonal, as output modes with $j\not = 1$ are allowed. In making this possible, a fundamental role is played by the Stokes phenomenon, 
which is discussed in the following subsection. It is to be stressed the following point: the Stokes phenomenon is present when at least an irregular singularity appears 
(see e.g. \cite{Haraoka}). In the present case, the equation with $z$ as independent variable displays three Fuchsian singularities, as seen, i.e. three regular singular points 
$z=0,z=1,z=\infty$. Still, by coming back to the original variable $x$, which is the relevant one for the physical problem, one finds that, actually, $x=\pm \infty$ on the real axis correspond to irregular singularities, as essential singularities in $\tanh (\beta x)$ and in $\cosh^{-2} (\beta x)$ appear in the coefficients of the equation itself. This fact is at the root of the Stokes phenomenon in the physical problem at hand.

\subsection{Integral representation and Stokes phenomenon}

By using the integral representation of the Hypergeometric function and changing variable back to $x$, we can write the selected solution of the EOM as
\begin{align}\label{eq_IntegralRepr}
f(x)=&\frac{\Gamma(\beta_1)\Gamma(\beta_2)\Gamma(\beta_3)}{2\pi i \Gamma(\alpha_1) \Gamma(\alpha_2) \Gamma(\alpha_3) \Gamma(\alpha_4)} e^{i \tilde k_1 x}(1+e^{2\tilde \beta x})\times \cr &\int_{\gamma - i\infty}^{\gamma + i \infty} ds \frac{\Gamma(s)\Gamma(\alpha_1-s)\Gamma(\alpha_2-s)\Gamma(\alpha_3-s)\Gamma(\alpha_4-s)}{\Gamma(\beta_1-s)\Gamma(\beta_2-s)\Gamma(\beta_3-s)}(-1)^{-s}e^{-2\tilde \beta x s } \,,
\end{align}
with $0<\gamma<1$. 
The integrand function has simple poles in the $s$-plane that are disposed on five lines parallel to the real axis. The poles are found at
\begin{align*}
s &= \tilde s_{n} := -n \,, \\
s &= s_{1,n} := \alpha_1 + n \,, \\
s &= s_{2,n} := \alpha_2 + n \,, \\
s &= s_{3,n} := \alpha_3 + n \,, \\
s &= s_{4,n} := \alpha_4 + n \,,
\end{align*}
with $n=0,1,2,3...$. The poles are represented in Figure \ref{fig_poles}, where the relative position of the poles is fixed by the following identification of the modes (see Figure \ref{fig_DR}):
\begin{align}\label{e_ModesConvention}
``1"=H \,,\quad ``2"=B \,,\quad ``3"=P \,,\quad ``4"=N \,.
\end{align}
\begin{figure}
	\centering
	\includegraphics[width=.6\textwidth]{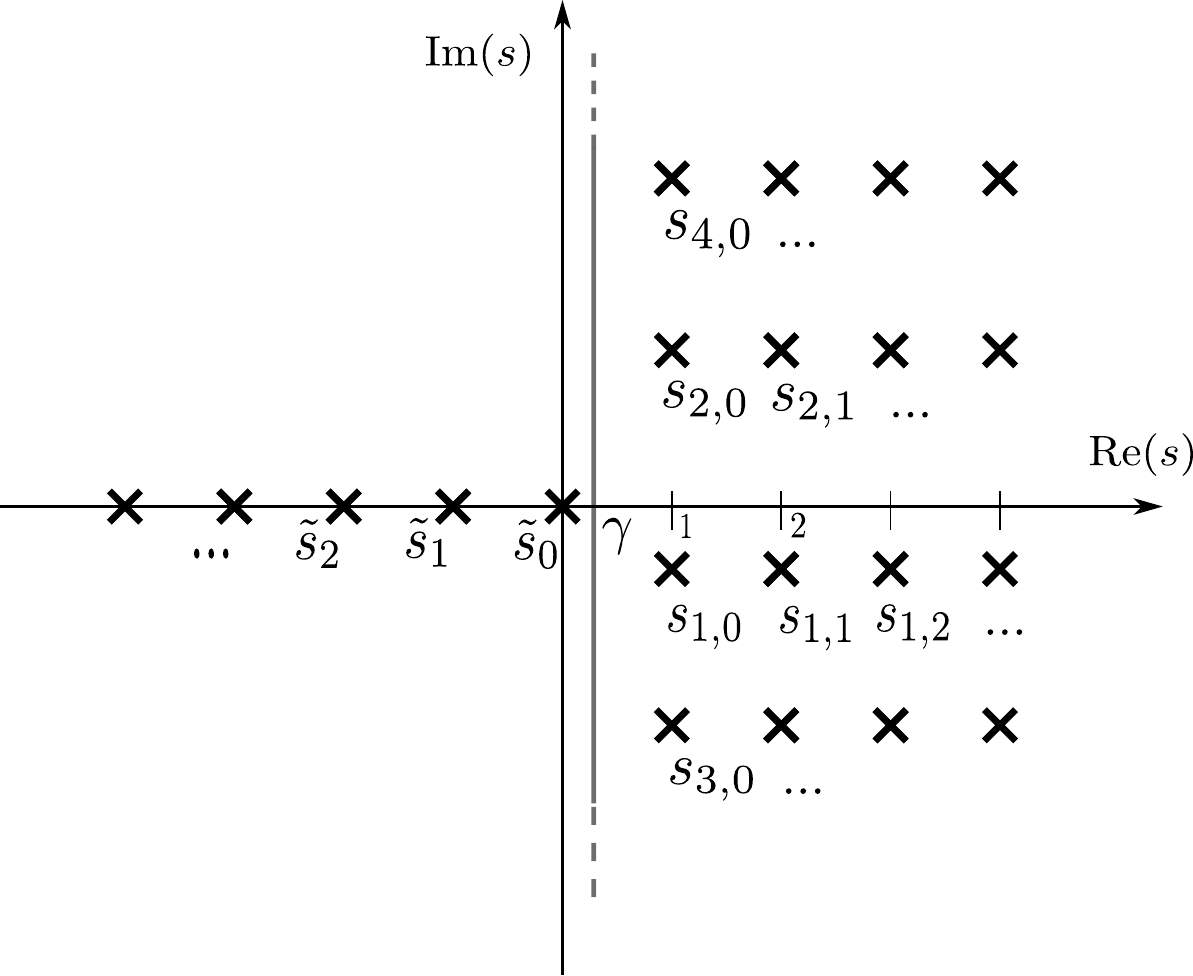}
	\caption{The poles of the integrand function of Eq.(\ref{eq_IntegralRepr}). The grey line represents the path of integration. This figure holds true just when  $k_j$ and $\tilde k_j$ are real for all $j=1,2,3,4$, i.e. in the subcritical case. \label{fig_poles} } 
\end{figure}

We can analytically continue the function $f$ in the complex $x$-plane, in order to study the behaviour for $x\rightarrow \infty e^{i\theta}$ for different angles $\theta$. By writing 
\begin{align*}
x= |x|e^{i\theta}  \,,\quad s=|s|e^{i\phi} \,,
\end{align*}
we see that the integral is convergent only in the half plane defined by
\begin{align*}
\cos\theta \cos \phi - \sin\theta \sin\phi \geq 0 \,,
\end{align*}
which defines a half-plane delimited by the line $\tan\phi = \cot\theta $. Namely, as the angle $\theta$ varies, the corresponding half-plane in the variable $s$ is defined by $\phi\in \left[ \theta - \frac \pi 2, \theta + \frac \pi 2  \right]$.  

Starting from $\theta=\pi$, which represents the solution at $x<0$ (inside the horizon), the integral is defined for $\phi\in\left[\frac \pi 2 , \frac 3 2 \pi\right]$: in this sector, the integral reduces to the sum of the residues at $s=\tilde s_n$.  The sum of the residues gives the series 
\begin{align}\label{Eq_Asympt1}
f(x < 0)=  e^{i \tilde k_1 x}(1+e^{2\tilde \beta x}) \left( 1 + \frac{\alpha_1 \alpha_2 \alpha_3 \alpha_4}{\beta_1 \beta_2 \beta_3} e^{2\tilde \beta x} + \sum_{n\geq2}O((e^{2\tilde \beta x})^n)  \right) \stackrel{x\rightarrow-\infty}{\sim}  e^{i \tilde k_1 x} \,.
\end{align} 
As we move $\theta$, the asymptotic expansion (\ref{Eq_Asympt1}) remains valid until we encounter new poles in the corresponding half-plane in $s$: this happens, as one can see from Figure \ref{fig_contorni}, when $\phi =\arg \alpha_3$ or $\phi = \arg \alpha_4$. As we pass those lines, a new term appears in the asymptotic expansion, corresponding to the residue at the pole $s_{3,0}=\alpha_3$ or $s_{4,0}=\alpha_4$. The appearance of new terms in the asymptotic expansion is known as Stokes Phenomenon: by the previous analysis, thus, we identified a first Stokes sector, given by
\begin{align*}
\theta \in \left[\frac \pi 2  + \arg\alpha_4, \frac 3 2 \pi +  \arg \alpha_3 \right] \,,
\end{align*} 
and the boundaries of this sector are two Stokes lines. 
\begin{figure}
	\centering
	\includegraphics[width=.6\textwidth]{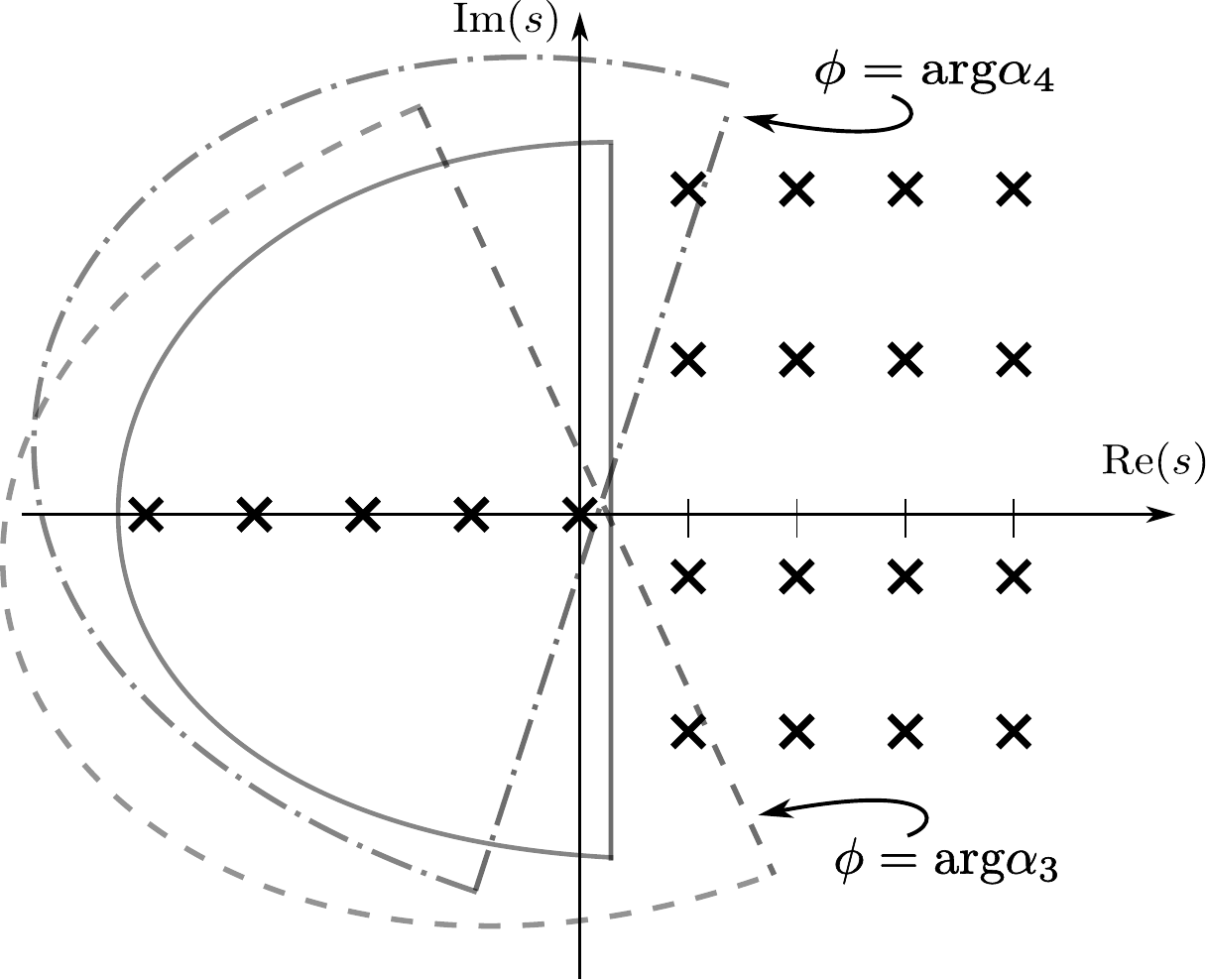}
	\caption{Three contours of integration representing three values of $\theta$. Full line: $\theta=\pi$; dashed line: $\theta= \frac 3 2 \pi +\arg\alpha_3$; dashed-dotted line: $\theta = \frac \pi 2 + \arg\alpha_4$. The latest two are Stokes lines, because the corresponding contour in the $s$-plane (defined by the $\phi$-angle) includes a new pole, giving rise to an additional term in the asymptotic expansion. Also this figure holds true just for the subcritical case.  \label{fig_contorni}}
\end{figure}
As we move $\theta$ past the Stokes line $\theta = \frac 3 2 \pi + \arg\alpha_3$, we include a new pole, $s_{3,0}$, in the contour:  the asymptotic expansion becomes 
\begin{align*}
f(x) = e^{i\tilde k_1 x}\left(1+ O(e^{-2\tilde \beta|x|})\right) + C_{\tilde k_1\rightarrow k_3} e^{ik_3 x} \,,
\end{align*} 
so the new term introduces mode-mixing. This expansion is true until we reach the next pole.\\
Now,  we note that the residues at the poles $s_{j,n}$ are 
$$\text{Res}_{s=s_{j,n}} \sim e^{(i \tilde k_1 + 2\tilde \beta - 2\tilde \beta s_{j,n} ) x } = e^{i k_j x}e^{-nx} \,.$$
Therefore, the contributions of the poles with $n\geq 1$ are negligible as long as we are interested in the asymptotic expansion ($|x|\rightarrow\infty$). From this consideration we understand that the only poles that are related to the Stokes phenomenon are $s=s_{j,0}$. The next Stokes lines are thus met at $\theta = \frac 3 2 \pi + \arg \alpha_1$ or $\theta = \frac \pi 2 + \arg \alpha_2$. 

It is now easy to figure out, by continuing the argument exposed above, all the Stokes lines of the function $f(x)$, which, ordered by increasing $\theta$, correspond to 
\begin{align*}
\theta_1 &= \frac \pi 2 + \arg\alpha_3, \quad
\theta_2 =\frac \pi 2 + \arg\alpha_1, \quad
\theta_3 = \frac \pi 2, \quad
\theta_4 =\frac \pi 2 + \arg\alpha_2, \quad
\theta_5 = \frac \pi 2 + \arg\alpha_4, \\
\theta_6 &=\frac 3 2 \pi + \arg\alpha_3 , \quad
\theta_7 = \frac 3 2 \pi + \arg\alpha_1, \quad
\theta_8 = \frac 3 2 \pi \,\quad
\theta_9 =\frac 3 2 \pi + \arg\alpha_2, \quad
\theta_{10} =\frac 3 2 \pi + \arg\alpha_4 \,.
\end{align*}

The value of $\arg\alpha_j$ depends on the values of the momenta $k_j$ and $\tilde k_1$. The momenta $k_j$, being unperturbed by the background, are always real. On the other hand, as we will discuss also in Section~\ref{sec_transcritical}, $\tilde k_1$ is real in subcritical regime and complex in transcritical regime: in that case we have $\text{Im} \tilde k_1<0$. We can thus evaluate $\arg\alpha_j$ as
\begin{align*}
\arg\alpha_j= \begin{cases*}
\arctan\left(- \frac{k_j - \tilde k_1}{2\tilde \beta}  \right) \,,\quad \text{subcritical regime} \\
\arctan\left(- \frac{k_j - \text{Re}\tilde k_1}{2\tilde \beta- \text{Im}\tilde k_1}  \right) \,,\quad \text{transcritical regime} \,.
\end{cases*}
\end{align*}

\subsection{Subcritical scattering}\label{sec_Subcritical}

The solution (\ref{Eq_ExatSolution}), for $z\rightarrow 0$ (which corresponds to the left asymptotic region $x \rightarrow -\infty$) is
\begin{align}
f \sim z^{i\frac{\tilde k_1}{2\tilde \beta}} = e^{i\tilde k_1 x} \,.
\end{align}
At right infinity $x\rightarrow +\infty$ ($z\rightarrow \infty$) it splits into a sum of plane waves
\begin{align}
f \sim \sum_{j=1}^4 C_{j} z^{i\frac{i k_j}{2\tilde \beta}} = \sum_{j=1}^4 C_{j} e^{i k_j x}\,,
\end{align}
where the connection coefficient $$C_j:= C_{\tilde k_1\rightarrow k_j}$$ can be obtained from (\ref{Eq_ConnCoeff}) by switching $\alpha_1 \leftrightarrow  \alpha_j$. If we put $$\tilde k_1=\tilde k_H\,.$$ the function $f(x)$ represents the scattering of the in-going modes $H$ from left infinity and $P$, $N$ and $B$ from right infinity, which produces an out-going $H$ mode at right infinity. This in the process that, since Hawking's seminal work, is usually considered in black hole physics to deduce the spontaneous particle creation.  Following backwards the out-going $H$ mode, we find that it originates from a mixture of modes: in particular, the coefficient $C_N$ represents the mixing with the negative-norm $N$ mode. As it is shown in \cite{PaperoVariazionale}, the expected number of spontaneously created Hawking particles is
\begin{align}\label{Eq_Monotonic_Nformal}
| N |= \left| \frac{|C_N|^2 v_N \partial_\omega \text{DR}(\omega,k)|_{k_N}}{|C_H|^2 v_H \partial_\omega \text{DR}(\omega,k)|_{k_H}}\right|\,,
\end{align}
Notice that the function $\text{DR}(k)$ defined in (\ref{Eq_CharEq_inf}) is slightly different from the function $\text{DR}(\omega,k)$ that appears in (\ref{Eq_Monotonic_Nformal}), which derives from the normalization of the quantum theory: they differ by a global factor $(\omega + V k)^2$.

\begin{figure}
	\begin{subfigure}{.49\textwidth}
		\centering
		\includegraphics[scale=.45]{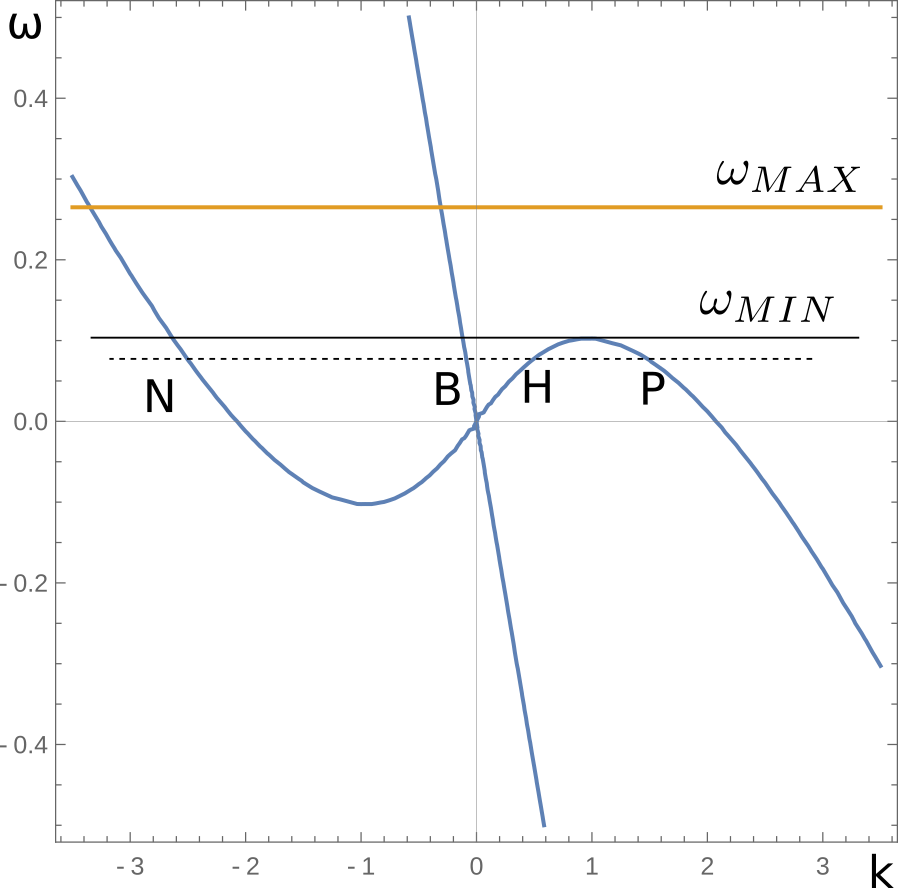}
		\caption{}
		\label{Fig_DRsub}
	\end{subfigure} \hfill
	\begin{subfigure}{.49\textwidth}
		\centering
		\includegraphics[scale=.45]{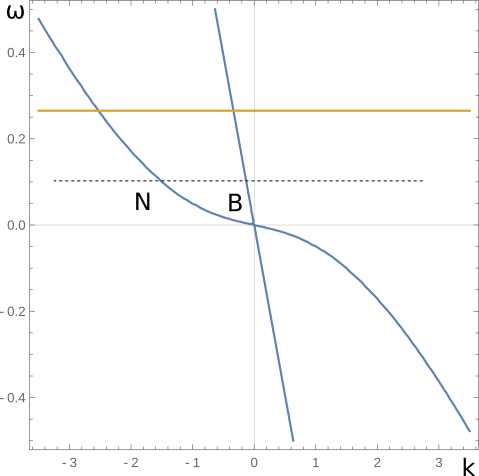}
		\caption{}
		\label{Fig_DRcrit}
	\end{subfigure}
	\caption{\label{Fig_DR2} (A):  The dispersion relation (\ref{Eq_CharEq_zero}) in the perturbed region ($x\rightarrow-\infty$) for $\lambda < \lambda_{\text{crit}}$. The modes $0<\omega<\omega_{MIN}$ do not experience an event horizon (\emph{subcritical regime}). (B): The dispersion relation (\ref{Eq_CharEq_zero}) for $\lambda > \lambda_{\text{crit}}$. In this case, for any $\omega$, the modes $H$ and $P$ become imaginary as they experience an event horizon: this is referred to as \emph{transcritical regime}. }
\end{figure}

The momenta $\tilde k_j(\omega)$ (which correspond to the normal modes at $x \rightarrow -\infty$) depend on the background amplitude $\lambda$. For large enough values of $\lambda$, the momenta $\tilde k_H(\omega)$ and $\tilde k_P(\omega)$ can become imaginary, as can be seen from Figures  \ref{Fig_DRsub} and \ref{Fig_DRcrit}.  In the current literature, the distinction between the subcritcal and transcritical regimes is governed by the presence or absence of a horizon, corresponding to a turning point in the differential equation of motion: the subcritical case is 
when no such turning point appears. On the other hand, the two regimes can be characterized also in a different way: one identifies the transcritical case by the fact that, in the asymptotic dispersion relation, two roots which 
are real in the unperturbed asymptotic region, become complex conjugates in the pertubed region. This is exactly the criterion we adopt in our approach to the 
problem. See also the discussion in subsection \ref{orso-sec}. 

We start considering the case where all $\tilde k_j$ are real (\emph{i.e.} subcritical regime). In this case, the square module of (\ref{Eq_ConnCoeff}) can be written explicitly:
\begin{align}
|C_1|^2=&\frac{(\tilde k_1-\tilde k_2) (\tilde k_1-\tilde k_3) (\tilde k_1-\tilde k_4) (\tilde k_2-k_1) (k_1-\tilde k_3) (k_1-\tilde k_4)
}{(\tilde k_1-k_2) (\tilde k_1-k_3) (\tilde k_1-k_4) (k_1-k_2)
	(k_1-k_3) (k_1-k_4)} \cr 
&\times \frac{\sinh (\frac{\pi\gamma 
	(\tilde k_1-k_2)}{2\beta}) \sinh (\frac{\pi\gamma  (\tilde k_1-k_3)}{2\beta}) \sinh (\frac{\pi\gamma  (\tilde k_1-k_4)}{2\beta})}{\sinh(\frac{\pi \gamma (\tilde k_1-\tilde k_2)}{2\beta}) \sinh(\frac{\pi \gamma  (\tilde k_1-\tilde k_3)}{2\beta}) \sinh(\frac{\pi \gamma  (\tilde k_1-\tilde k_4)}{2\beta}) } \cr  &\times \frac{\sinh (\frac{\pi \gamma  (\tilde k_2-k_1)}{2\beta}) \sinh (\frac{\pi \gamma 
	(\tilde k_3-k_1)}{2\beta}) \sinh (\frac{\pi \gamma  (\tilde k_4-k_1)}{2\beta})}{\sinh(\frac{\pi \gamma (k_1-k_2)}{2\beta}) \sinh(\frac{\pi \gamma  (k_1-k_3)}{2\beta})
	\sinh(\pi  (k_1-k_4))}\,,
\end{align}
and similarly $|C_j|^2$ are obtained by rotations of the indices. 
%\begin{align}
%|C_j|^2 = (k_1 \leftrightarrow k_j) \,.
%\end{align}
We want to compare this results with the perturbative expansion we made in \cite{PaperoVariazionale}. We start by writing explicitly 
\begin{align}
|N|= &\frac{(\tilde k_H-k_N) (k_N-\tilde k_P) (k_N-\tilde k_N) (k_N-\tilde k_B) (\omega + k_H V)^2 
}{(\tilde k_H-k_H) (k_H-\tilde k_P) (k_H-\tilde k_N) (k_H-\tilde k_B) (\omega+ k_N V)^2 } \cr
&\times\frac{\sinh \left(\frac{\pi  \gamma (\tilde k_H-k_H)}{2 \beta}\right)
	\sinh \left(\frac{\pi  \gamma
		(\tilde k_P-k_N)}{2 \beta}\right)  \sinh \left(\frac{\pi  \gamma (\tilde k_N-k_N)}{2 \beta}\right)}{\sinh\left(\frac{\pi  \gamma (\tilde k_H-k_N)}{2 \beta}\right) \sinh\left(\frac{\pi  \gamma (\tilde k_P-k_H)}{2 \beta}\right)\sinh\left(\frac{\pi  \gamma (\tilde k_N-k_H)}{2 \beta}\right)} \cr
&\times\frac{\sinh \left(\frac{\pi  \gamma (\tilde k_B-k_N)}{2 \beta}\right) \sinh \left(\frac{\pi  \gamma
		(k_H-k_P)}{2 \beta}\right) \sinh \left(\frac{\pi  \gamma (k_H-k_B)}{2 \beta}\right)}{\sinh\left(\frac{\pi  \gamma (\tilde k_B-k_H)}{2 \beta}\right)\sinh\left(\frac{\pi  \gamma (k_P-k_N)}{2 \beta}\right)
	\sinh\left(\frac{\pi  \gamma (k_N-k_B)}{2 \beta}\right) }\,,
\end{align}
where we have used the expression for the flux factors 
\begin{align}\label{Eq_smartfluxes}
 v_i(\omega)\partial_\omega  \text{DR}|_{k_i}= -\frac{\gamma^2 V^4}{\omega_{lab}^2|_{k_i}}  \prod_{j\neq i} (k_i(\omega) - k_j(\omega))\,,
\end{align} 
that was deduced in the Appendix of \cite{PaperoVariazionale}. We use the low-frequency expressions of the momenta $k_j(\omega)$, written in (\ref{Eq_kH_lowW})-(\ref{Eq_kN_lowW}). Notice that the expressions of $\tilde k_j(\omega)$ are simply obtained by the shift $\mu^2 \mapsto \mu^2 + \lambda$.\\ 
 We stress that, up to this point, our results are {\sl exact}, in the sense that no approximation has been made. Still, in  order to provide analytical 
expressions to the moments $k_j,\tilde{k}_j$, we are forced to introduce some approximations, indeed $k_j,\tilde{k}_j$ are roots of a fourth degree equation: one might 
provide explicit expressions for the corresponding roots, but at the price to write very long and by no means perspicuous expressions. As a consequence, for these roots we use 
approximate expressions for low $\omega$, as discussed in the previous sections.
With the help of Wolfram Mathematica, we compute the leading order of $|N|$ both in $\lambda$ and in $\omega$: we checked that the two limit commute, so the order of the two expansions makes no difference. We obtain
\begin{align}
|N|= \left(\frac{\pi ^2  \lambda ^2 \omega g (g+\mu V) \sinh^2\left(\frac{\pi  \sqrt{g^2-\mu^2 V^2}}{2 \beta  V^2}\right)}{16 \beta ^2 \gamma  \mu (g-\mu V) \left(g^2-\mu^2 V^2\right)^{3/2}} + O(\omega^2)\right) + O(\lambda^3 )\,.
\end{align}
We notice that the qualitative behaviour is the same as in \cite{PaperoVariazionale}, and in particular we have $ N \sim \omega$: this behaviour confirms what was found for the subcritical case also in \cite{CoutantWeinfurtner}. This is a strong confirmation that such a behaviour should be expected in subcritical systems, and it seem not to depend neither on particular approximations, nor on the characteristics of the background function. 

An even more interesting comparison is the estimation of the ``effective temperature'' that the authors found in \cite{PaperoVariazionale} for the subcritical case. The ratio $\frac{|P|}{|N|}$, in the case $\tilde k_j \in \mathbb R$ as before, becomes
\begin{align}
\frac{|P|}{|N|}= &\frac{(\tilde k_H-k_P) (k_P-\tilde k_P) (k_P-\tilde k_N) (k_P-\tilde k_B)  (k_N V+\Omega)^2 }{(\tilde k_H-k_N) (\tilde k_P-k_N) (\tilde k_N-k_N) (\tilde k_B-k_N)
	(k_P V+\Omega)^2} \cr 
&\times \frac{\sinh (\frac{\pi \gamma  (\tilde k_H-k_N)}{2\beta}) \sinh ( \frac{\pi \gamma  (\tilde k_N-k_P)}{2\beta})\sinh
	(\frac{\pi \gamma  (\tilde k_P-k_P)}{2\beta})}{ \sinh(\frac{\pi \gamma (\tilde k_H-k_P)}{2\beta})\sinh(\frac{\pi \gamma (\tilde k_P-k_N)}{2\beta})  \sinh(\frac{\pi \gamma  (\tilde k_N-k_N)}{2\beta})} \cr 
&\times \frac{ \sinh (\frac{\pi \gamma 
	(\tilde k_B-k_P)}{2\beta}) \sinh (\frac{\pi \gamma  (k_H-k_N)}{2\beta}) \sinh (\frac{\pi\gamma  (k_N-k_B)}{2\beta})}{\sinh(\frac{\pi \gamma  (\tilde k_B-k_N)}{2\beta}) \sinh(\frac{\pi\gamma  (k_H-k_P)}{2\beta})  \sinh(\frac{\pi\gamma 
	(k_P-k_B)}{2\beta})}\,.
\end{align}
To estimate the temperature we compute the leading order in $\omega$ of $\log\left(  \frac{|P|}{|N|}\right)$. We use the low-$\omega$ expressions of the modes $k_j$ that we have written in Eq. (\ref{Eq_kH_lowW})-(\ref{Eq_kN_lowW}). The momenta $\tilde k_j$ can be obtained by the switch $\mu^2\rightarrow\mu^2+\lambda$: they can be also written as follows:
{\small
	\begin{align}\label{Eq_kHs_lowW}
	\tilde k_N &= \frac{\sqrt{\mu^2+\lambda} - g V}{g - \sqrt{\mu^2+\lambda} V}\,\omega + O(\omega^3), \\
	\tilde k_B &= -\frac{\sqrt{\mu^2+\lambda} + g V}{g + \sqrt{\mu^2+\lambda} V}\,\omega + O(\omega^3), \\
	\tilde k_P &=  \frac{\sqrt{\lambda_{\text{crit}}-\lambda}}{
		\gamma V } - \left(\frac 1 V + \frac {g^2} {\gamma^2 V^3 (\lambda_{\text{crit}}-\lambda)}\right) \,\omega  - \frac{g^2 (2 g^2 + (\mu^2+\lambda) V^2) }{2\gamma  V^5(\lambda_{\text{crit}}-\lambda)^{5/2} }\,\omega^2 + O(\omega^3) , \\
	\tilde k_H &=  -\frac{\sqrt{\lambda_{\text{crit}}-\lambda}}{
		\gamma V } - \left(\frac 1 V + \frac {g^2} {\gamma^2 V^3 (\lambda_{\text{crit}}-\lambda)}\right) \,\omega  + \frac{g^2 (2 g^2 + (\mu^2 + \lambda ) V^2) }{2\gamma V^5  (\lambda_{\text{crit}}-\lambda)^{5/2} }\,\omega^2 + O(\omega^3)\,,\label{Eq_kNs_lowW}
	\end{align}}
where 
\begin{align}\label{Eq_LambdaC}
\lambda_{\text{crit}}= \frac{g^2- \mu^2 V^2}{V^2} \,.
\end{align}
These expressions make clear that the subcritical  regime (\textit{i.e.} $\tilde k_j\in \mathbb{R}$) corresponds to $\lambda<\lambda_{\text{crit}}$. 

The leading order of $\log\left(  \frac{|P|}{|N|}\right)$ is
{\small
	\begin{align}\label{Eq_LogPoN_Exact}
	&\log\left(  \frac{|P|}{|N|}\right) = \frac{\pi  g  \omega  }{\beta \gamma V \left(g^2-\mu
		^2 V^2\right)^{3/2} \left(g^2-V^2 \left(\lambda +\mu ^2\right)\right)}\Bigg[ g \lambda  V^2 \sqrt{g^2-\mu ^2 V^2}\times \cr
	& \left(\coth \left(\frac{\pi  \left(\sqrt{g^2-\mu ^2 V^2}-\sqrt{g^2-V^2 \left(\lambda +\mu
			^2\right)}\right)}{2 \beta  V^2}\right)+\coth \left(\frac{\pi  \left(\sqrt{g^2-V^2 \left(\lambda +\mu ^2\right)}+\sqrt{g^2-\mu ^2 V^2}\right)}{2 \beta  V^2}\right)\right) \cr 
	&+2
	\left(g^2 V \sqrt{\left(\lambda +\mu ^2\right) \left(g^2-\mu ^2 V^2\right)}-2 g V^2 \left(\lambda +\mu ^2\right) \sqrt{g^2-\mu ^2 V^2}\right) \cr 
	&-2\left(\mu ^2 V^3 \sqrt{\left(\lambda +\mu
		^2\right) \left(g^2-\mu ^2 V^2\right)}+2 g^3 \sqrt{g^2-\mu ^2 V^2}\right) \coth \left(\frac{\pi  \sqrt{g^2-\mu ^2 V^2}}{2 \beta  V^2}\right)\Bigg] + O(\omega^2)\,.\cr
	\end{align}}
In order to compare it to the perturbative result, we take the limit $\lambda \rightarrow 0$:
\begin{align}
\frac{2 g  \omega  \left(\pi  (2 g+\mu  V) \sqrt{g^2-\mu ^2 V^2} \coth \left(\frac{\pi  \sqrt{g^2-\mu ^2 V^2}}{2 \beta  V^2}\right)+2 \beta  g V^2\right)}{\beta \gamma  V
	\left(g^2-\mu ^2 V^2\right)^{3/2}}.
\end{align}
For $\beta \sim 0$, which amounts physically to considering small values for the derivative of the dielectric pulse,  we find:
\begin{align}
&\log\left(  \frac{|P|}{|N|}\right)\approx \beta_{pert} \omega \,, \\
&\beta_{pert}   = \frac {2\pi g (2 g + \mu V)  }{ \beta  \gamma V (g^2 - \mu^2 V^2)} = \frac{\pi \gamma}{\beta} \lim\limits_{\omega\rightarrow 0} \frac{2 k_H - k_P - k_N}{\omega}\,.
\end{align}
This results confirms the validity of the prediction made in \cite{PaperoVariazionale} and that the value ot $T_{pert}$ is not strongly dependent on the peculiarities of the background. We notice, however, that the expression (\ref{Eq_LogPoN_Exact}) allows to study how the temperature depends on $\lambda$: in particular, for $\lambda = \lambda_{\text{crit}}$, we have
\begin{align}
T(\lambda=\lambda_{\text{crit}}) = \left.\frac{\omega}{\log(P/N)}\right|_{\lambda=\lambda_{\text{crit}}} = 0. 
\end{align}
The vanishing of the temperature at $\lambda=\lambda_{\text{crit}}$ is very puzzling, in the sense that thermal particle creation may be found both in the subcritical and in 
the transcritical case, whereas a discontinuous behaviour between the two regimes is just suggested by such a result when $\lambda=\lambda_{\text{crit}}$. It has also been stressed that, in the 
above scheme of approximation for low $\omega$, for $\lambda=\lambda_{\text{crit}}$ one finds a degeneracy, at least at the leading order, of $\tilde{k}_P$ with $\tilde{k}_N$, and a 
singular behaviour of the subleading ones. This kind of phenomenon will be investigated in future analysis.

\subsection{Transcritical scattering}\label{sec_transcritical}

We now consider the solution in the transcritical case, that is $\lambda>\lambda_{\text{crit}}$. In this case, the $\tilde k_H$ and $\tilde k_P$ become complex, as it is shown in Figure \ref{Fig_DRcrit}.  This fact is the direct consequence of the presence of an event horizon: these modes cannot propagate to the left-infinity. The low-$\omega$ expressions are found from Eq. (\ref{Eq_kHs_lowW})-(\ref{Eq_kNs_lowW})  for $\lambda>\lambda_{\text{crit}}$: notice, however, that the modes have now the wrong label, since the mode that is labeled $N$ becomes complex, while the $H$-mode is real, in contradiction with the visual interpretazion of Figure $\ref{Fig_DRcrit}$. Thus, for the transcritical regime, we need to rename the modes in the following way:
{\small
	\begin{align}\label{Eq_kHc_lowW}
	\tilde k_N &= \frac{\sqrt{\mu^2+\lambda} - g V}{g - \sqrt{\mu^2+\lambda} V}\,\omega + O(\omega^3), \\
	\tilde k_B &= -\frac{\sqrt{\mu^2+\lambda} + g V}{g + \sqrt{\mu^2+\lambda} V}\,\omega + O(\omega^3), \\
	\tilde k_P &=  i\frac{\sqrt{\lambda-\lambda_{crit}}}{
		\gamma V } - \left(\frac 1 V - \frac {g^2} {\gamma^2 V^3 (\lambda-\lambda_{crit})}\right) \,\omega  +i \frac{g^2 (2 g^2 + (\mu^2+\lambda) V^2) }{2\gamma  V^5(\lambda-\lambda_{crit})^{5/2} }\,\omega^2 + O(\omega^3) , \\
	\tilde k_H &=  -i\frac{\sqrt{\lambda-\lambda_{crit}}}{
		\gamma V } - \left(\frac 1 V - \frac {g^2} {\gamma^2 V^3 (\lambda - \lambda_{crit})}\right) \,\omega  -i \frac{g^2 (2 g^2 + (\mu^2 + \lambda ) V^2) }{2\gamma V^5  (\lambda - \lambda_{crit})^{5/2} }\,\omega^2 + O(\omega^3)\,.\label{Eq_kNc_lowW}
	\end{align}}
Notice that it holds
\begin{align}\label{Eq_PNconjugates}
(\tilde k_H)^*=\tilde k_P \,.
\end{align}
This is not true just in the low-$\omega$ limit, but for all $\omega$. Indeed, the modes $\tilde k$ are the roots of a fourth order polynomial with real coefficients: since the roots $\tilde k_N$ and $\tilde k_B$ are always real, the other two roots must be either real or complex conjugates.   
Another very relevant observation is that, being the basis \eqref{basis} asymptotically diagonal in the $\tilde{k}_i$, as discussed in the previous sections, 
we have also the possibility to get rid of the unwanted complex and exponentially growing mode, say $\tilde{k}_4$ (about the growing mode cf. e.g. the discussion in \cite{cpf}), simply by imposing that the corresponding coefficient $D_4$ is zero. Actually, in our following discussion, we put only $D_1\not =0$. Note also that the connection coefficients, 
being connecting $\tilde{k}_1$ to $k_i$, $i=1,2,3,4$, cannot resume the aforementioned growing mode.

Thanks to (\ref{Eq_PNconjugates}), we can simplify the expression of $\frac{|P|}{|N|}$: indeed, when computing $\frac{C_P C_P^*}{C_N C_N^*}$, one finds a factor
\begin{align*}
\frac{\Gamma \left(1+\frac{i (\tilde k_H-k_P) \gamma}{2 \beta}\right) \Gamma \left(-\frac{i (\tilde k_P-k_N) \gamma}{2 \beta}\right) \Gamma \left(1+\frac{ i (k_P
		- \tilde k_H^*)\gamma}{2 \beta}\right) \Gamma \left(\frac{i  \left(\tilde k_P^*-k_N\right)\gamma}{2 \beta}\right)}
		{\Gamma \left( 1+\frac{i (\tilde k_H-k_N) \gamma} {2
		\beta}\right) \Gamma \left(-\frac{i (\tilde k_P-k_P) \gamma}{2 \beta}\right) \Gamma \left(1+\frac{i (k_N -i  \tilde k_H^*)\gamma}{2 \beta}\right) \Gamma \left(\frac{i \gamma 
		\left(\tilde k_P^*-k_P\right)}{2 \beta}\right)} \,,
\end{align*}
which, using  (\ref{Eq_PNconjugates}) and recalling $\Gamma(1+z)/\Gamma(z)=z$, reduces to
\begin{align*}
\frac{(k_P
	- \tilde k_P)(\tilde k_P^*-k_P)}{(\tilde k_P-k_N)(\tilde k_P^*-k_N)}=\frac{|k_P-\tilde k_P|^2}{|k_N-\tilde k_P|^2}\,.
\end{align*}
The final exact expression is
\begin{align}
&\frac{|P|}{|N|} = \frac{ (k_P-\tilde k_N) (k_P-\tilde k_B)\,|k_P-\tilde k_P|^2 \, (k_N V+\omega)^2 }{ (k_N-\tilde k_N) (k_N-\tilde k_B)\,|k_N-\tilde k_P|^2 \, (k_P V+\omega)^2}\times\quad\qquad\quad\qquad \cr 
&  \times\frac{\sinh \left(\frac{\pi  \gamma (\tilde k_N-k_P)}{2
		\beta}\right)  \sinh \left(\frac{\pi  \gamma (\tilde k_B-k_P)}{2 \beta}\right)
	\sinh \left(\frac{\pi  \gamma
		(k_H-k_N)}{2 \beta}\right)  \sinh \left(\frac{\pi  \gamma (k_N-k_B)}{2 \beta}\right)}{\sinh\left(\frac{\pi  \gamma (\tilde k_N-k_N)}{2 \beta}\right) \sinh\left(\frac{\pi  \gamma (\tilde k_B-k_N)}{2 \beta}\right) \sinh\left(\frac{\pi  \gamma (k_H-k_P)}{2 \beta}\right)  \sinh \left(\frac{\pi  \gamma (k_P-k_B)}{2 \beta}\right) }
\end{align}
As we did in the previous Section, we compute, to the leading order in $\omega$, 
\begin{align}
\log\left(\frac{|P|}{|N|}\right) = \frac{2 \pi\, \omega \, g^2  \coth \left(\frac{\pi  \sqrt{g^2-\mu^2 V^2}}{2 \beta  V^2}\right)}{\beta \gamma V  \left(g^2-\mu^2 V^2\right) (1-\lambda_{\text{crit}}/\lambda)}+O\left(\omega^2\right)\,.
\end{align}
We can now write the Hawking temperature
\begin{align}
T(\lambda) = \frac{\omega}{ \log\left(\frac{|P|}{|N|}\right)} = \frac{\beta \gamma V  \left(g^2-\mu^2 V^2\right) } {2 \pi\, g^2  \coth \left(\frac{\pi  \sqrt{g^2-\mu^2 V^2}}{2 \beta \gamma V^2}\right)} \left(1-\frac{\lambda_{\text{crit}}}{\lambda}\right)\,.
\end{align}
This result confirms what was found for the subcritical case, that is
\begin{align*}
T(\lambda_{\text{crit}})=0\,.
\end{align*}
More interestingly, in the far critical case $\lambda\gg\lambda_{\text{crit}}$, if we consider $\beta\sim 0$ as previously done, we find
\begin{align}
T_H = \frac{\beta \gamma V  \left(g^2-\mu^2 V^2\right) } {2 \pi\, g^2  } \,,
\end{align}
which coincides with the far critical limit which was obtained in \cite{PaperoVariazionale} using the Orr-Sommerfeld approach.

In Figure \ref{Fig_TcExact} we plot the temperature in units of $T_H$, namely
\begin{align}\label{Eq_ToTH}
\frac{T(\lambda)}{T_H} = \frac{\omega}{T_H \log\left(\frac{|P|}{|N|}\right)} \,,
\end{align}
for various values of $\lambda$, both in subcritical and transcritical case. As in the perturbative approach of the previous section, for the plot we choose $g=1$, $\mu=1.2$; we than choose a near critical pulse velocity $V=0.8$ and a low value $\beta=0.02$. We clearly observe what was predicted in \cite{PaperoVariazionale} using a perturbative approach: the effective temperature computed for $\lambda\ll\lambda_{\text{crit}}$ (subcritical regime) is
\begin{align}
T(\lambda\sim 0)\approx \frac{T_H}{3}\,.
\end{align}
Starting from this value, the temperature decreases for increasing $\lambda$ until it reaches zero at $\lambda=\lambda_{\text{crit}}$. For $\lambda>\lambda_{\text{crit}}$ the temperature starts growing again, and $\lambda \gg\lambda_{\text{crit}}$ it stabilizes at the value $T_H$. 

\begin{figure}
	\begin{subfigure}{\textwidth}
		\centering
		\includegraphics[scale=.95]{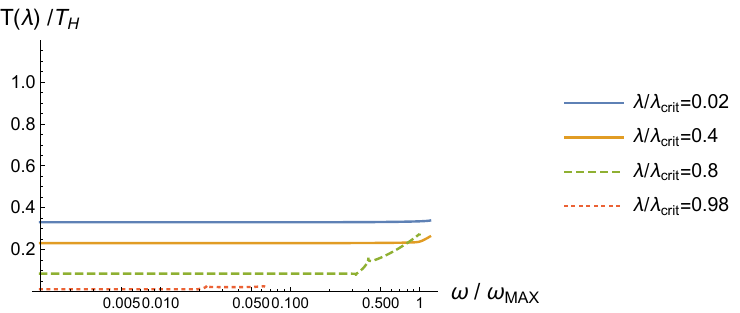}
		\caption{}
		%\caption{The coefficient $|H|$ from (\ref{Eq_H})}
		\label{Fig_TsExact}
	\end{subfigure}\\ \vspace{4mm}
	\begin{subfigure}{\textwidth}
		\centering
		\includegraphics[scale=.95]{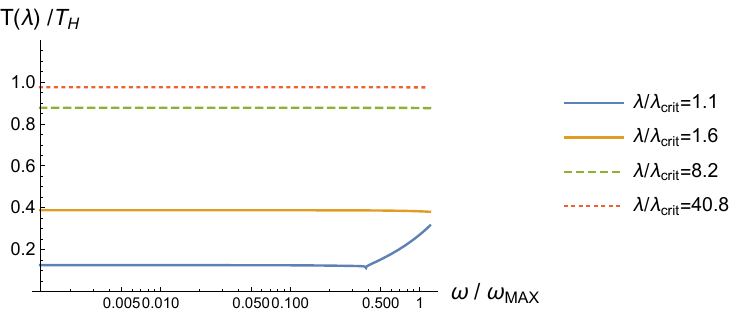}
		%\caption{The coefficient $|B|$ from (\ref{Eq_B})}
		\caption{}
		\label{Fig_TcExact}
	\end{subfigure}
	\caption{The temperature $T(\lambda)$ in units of $T_H$, as defined in (\ref{Eq_ToTH}), for different values of $\lambda$, in  subcritical  regime (A)  and   transcritical regime (B). The plots are made for a near critical pulse velocity and a low value of the parameter $\beta$. Starting from $T(\lambda\sim 0)\approx T_H/3$, the temperature decreases till reaching zero for $\lambda=\lambda_{\text{crit}}$; for $\lambda>0$ (transcritical) the temperature increases again, reaching $T_H$ for $\lambda\gg\lambda_{\text{crit}}$.   }
	\label{Fig_TExact}
\end{figure}

So far, the exact solution we provided has confirmed the predictions that were made using different approximations in different regimes. In the future, an even deeper study of this solution may allow to describe precisely the transition between the subcritical and the transcritical regime, the onset of termality and formation of the event horizon.

\section{The original $\phi\psi$-model, exact solutions and thermality}
\label{original-fipsi}

In this section, we provide results concerning a reduction of the so-called Hopfield model, which represents an effective description 
of the interaction between the electromagnetic field and a dielectric medium. In particular, atoms and molecules of the dielectric 
are replaced by a mesoscopic polarization field, still providing an efficient physical description.  
The electromagnetic Lagrangian for the full Hopfield model is quite involved, and has been discussed, by means of different theoretical 
tools, in \cite{hopfield-covariant,hopfield-kerr}. A simplified model, introduced in \cite{hopfield-hawking},  can be related to the two dimensional reduction of the Hopfield model adopted in \cite{finazzi-carusotto-pra13}, and is such that the electromagnetic field and the polarization field are simulated by a pair of scalar fields, $\varphi$, and $\psi$ respectively, in the so-called $\varphi \psi$--model. Despite its simplification, it is still set-up in such a way that we get exactly the same dispersion relation and, moreover, we can simulate the same coupling as in the full case.  
Its Lagrangian is 
\begin{align}
{\mathcal L}_{\varphi\psi} = \frac{1}{2} (\partial_\mu \varphi)(\partial^\mu \varphi)+\frac{1}{2\chi \omega_0^2} \left[ (v^\alpha \partial_\alpha \psi)^2 - \omega_0^2 \psi^2 \right] -\frac{g}{c} (v^\alpha \partial_\alpha \psi) \varphi +\frac{\lambda}{4!} \psi^4,
\label{Lagrangian}
\end{align}
where $\chi$ plays the role of the dielectric susceptibility, $v^\mu$ is the usual four-velocity vector of the dielectric, $\omega_0$ is the proper frequency  of the medium, and $g$ is the coupling constant between the fields. The latter constant is henceforth put equal to one, as its original motivation (see \cite{hopfield-hawking}) can be relaxed without problems in a more advanced discussion (cf. also \cite{master}). As shown in \cite{belcaccia-universe}, we may introduce the above fourth-order non-linear term in the polarization field $\psi$ in the Lagrangian. % and obtain solitonic solutions representing optical pulses propagating in the dielectric medium. 
Herein, we assume $\lambda>0$.\\
By extending our analysis, and on the grounds of the previous sections, we adopt a phenomenological model where we can leave room for a spacetime dependence of the microscopic parameters $\chi,\omega_0$, in such a way that $\chi \omega_0^2 $ is a constant. 
The equations of motion are
\begin{align}
\square \varphi + \frac{1}{c}  (v^\alpha \partial_\alpha \psi)=&0,\\
\frac{1}{\chi \omega_0^2} (v^\alpha \partial_\alpha)^2 \psi + \frac{1}{\chi} \psi - \frac{1}{c} v^\alpha \partial_\alpha  \varphi=&0. 
\end{align}
In particular, we define 
\beq
\epsilon^2:=\frac{1}{\chi \omega_0^2},
\eeq
which corresponds to the parameter appearing in the Orr-Sommerfeld like equation (master equation, cf. \cite{master}). 
We can separate the above system, obtaining equations involving only one of the fields $\varphi, \psi$. We can also separate the equations 
for $\varphi, \psi$, and, in order to maintain the same line of reasoning as in the previous sections, we 
focus on $\psi$, obtaining 
\beq
\epsilon^2 \square (v^\alpha \partial_\alpha)^2 \psi+\square \frac{1}{\chi} \psi+\frac{1}{c^2} (v^\alpha \partial_\alpha)^2 \psi=0. 
\label{eqpsi}
\eeq
Let us also choose, as usual, the comoving frame; If we put $\psi (t,x)=e^{-i \omega t} h(x)$, we obtain 
\begin{align}
\label{eq-for-h}
&\epsilon^2 h^{(4)}(x)+2 i \epsilon^2 \frac{\omega}{v}   h^{(3)}(x)+\left(\frac{1- \gamma^2 \frac{v^2}{c^2} \chi(x)}{\gamma^2 v^2 \chi(x)}-
\epsilon^2 \frac{w^2}{\gamma^2 v^2}\right)h^{(2)}(x)+
\left(-\frac{2 i \omega}{c^2 v}-\frac{2 \chi'(x)}{\gamma^2 v^2 \chi^2 (x)}+\frac{2 \epsilon^2 i \omega^3}{c^2 v}\right) h^{(1)}(x)
+\cr
&\left( \frac{\omega^2}{c^2 \gamma^2 v^2 \chi (x)}(1+\gamma^2 \chi (x)) +\frac{2 \chi'^2(x)-\chi(x) \chi''(x)}{\gamma^2 v^2 \chi^3 (x)}
-\epsilon^2 \frac{\omega^4}{v^2 c^2} \right) h(x)=0,
\end{align}
where $\chi'(x),\chi''(x)$ indicate the first and the second derivative wrt $x$.
We stress that, with respect to \cite{master}, we do not eliminate the third order term, as we don't need to grant a Orr-Sommerfeld form for 
our equation of motion, as we are going to compute exact solutions, i.e., solutions which do not depend on the smallness of the parameter $\epsilon$. 
In the following, we assume the monotone profile in the comoving frame
\beq
\frac{1}{\chi(x)}=\frac{1}{\chi_0}-\frac{1}{2} \lambda (1-\tanh (\tilde \beta x)),
\label{profile-os}
\eeq
where $\chi_0$ is a constant value of the dielectric susceptibility and we define the parameter $\tilde \beta=\frac{\beta}{\gamma}$ as in the previous model. We discuss some physical consequence of our choice in Appendix \ref{refractive-sec}.

\subsection{A Fuchsian framework for the equation of motion}

We consider the following change of variable: $z=-e^{2\tilde  \beta x}$. As a consequence, we obtain 
\beq
\label{unchiz}
\frac{1}{\chi(z)}=\frac{1}{\chi_0}-\frac{\lambda}{1-z},
\eeq
and
\begin{align}
\label{equazeta} 
&16 \epsilon^2 \tilde \beta^4 z^4 h^{(4)}(z)+16 \epsilon^2 \frac{(6 \tilde \beta v+i \omega)}{v} \tilde \beta^3 z^3 h^{(3)}(z)+\cr
&4 
\left(\frac{-1+\epsilon^2 \omega^2}{c^2}+ \frac{1-z+\chi_0\left( -\lambda+\epsilon^2 \gamma^2 (28 \tilde \beta^2 v^2+12 i \tilde \beta v \omega-\omega^2) \right) (1+z)}{\chi_0 \gamma^2 v^2 (1-z)}\right) \tilde \beta^2 z^2 h^{(2)}(z)+\cr
& 4 \left( 4 \tilde \beta^3 \epsilon^2+\frac{4 i \tilde \beta^2 \epsilon^2 \omega}{v} +\frac{\tilde \beta(-1+\epsilon^2 \omega^2)}{c^2}+\frac{i \omega(-1+\epsilon^2 \omega^2)}{c^2 v}
+\frac{\tilde \beta}{\gamma^2 v^2} \left(\frac{1}{\chi_0}-\frac{\lambda (1+z)}{(1-z)^2}-\epsilon^2 \gamma^2 \omega^2 \right) \right) \tilde \beta z h^{(1)}(z)+\cr
&\left( -\frac{4 \tilde \beta^2 \lambda z (z+1)}{\gamma^2 v^2 (1-z)^3}-\frac{\epsilon^2 \omega^4}{v^2 c^2} +\frac{\omega^2 (1-z-\chi_0 \lambda)}{c^2 \chi_0 \gamma^2 v^2 (1-z)}
+\frac{\omega^2}{c^2 v^2}\right) h(z)=0.
\end{align}

We start looking for local solutions, along the path sketched for \eqref{Eq_Monotonic_FuchsianEQ}, around $z=\infty$. By introducing $t=1/z$, a series expansion for the solutions can be 
provided in the following form: 
\begin{align}\label{Eq_FormalSolution_inf_psi}
f(t) = t^{-i \alpha} \sum_{n=0}^{\infty} c_n t^n \,.
\end{align}
The characteristic equation for the exponent $k:=2\tilde \beta \alpha$ is
\begin{align}\label{Eq_CharEq_inf_psi}
\text{DR}(k)&:= 
(\omega^2 - \chi_0 \gamma^2 (k v + \omega)^2 (-1 + \epsilon^2 \omega^2) + 
c^2 k^2 (-1 + \chi_0 \epsilon^2 \gamma^2 (k v + \omega)^2))=0.
%/(c^2 \chi_0 \gamma^2 v^2)=0\,,
\end{align}
As in the previous sections, except possibly for a zero measure set in the space of available parameters 
in our model, we have four distinct roots which, moreover, not differ each other by an integer value. As 
a consequence, the spectral type is $(1111)$.\\

At $z=0$, we get four independent local solutions of the form
\begin{align}\label{Eq_FormalSolution_zero_psi}
f(z) = z^{i \tilde\alpha} \sum_{n=0}^{\infty} c_n z^n \,,
\end{align}
where $\tilde k :=2\tilde \beta \tilde\alpha$ satisfies 
\begin{align}\label{Eq_CharEq_zero_psi}
\text{DR}_0(\tilde k) & :=(\omega^2 - \chi_0 \lambda \omega^2 - \chi_0 \gamma^2 (\tilde k v + \omega)^2 (-1 + \epsilon^2 \omega^2) + 
c^2 {\tilde k}^2 (-1 + 
\chi_0 (\lambda + \epsilon^2 \gamma^2 (\tilde k v + \omega)^2)))=0,
%/(c^2 chi0 gam^2 v^2).
\end{align}
%Eq. (\ref{Eq_CharEq_zero}) is equivalent to (\ref{Eq_CharEq_inf}) if one maps $\mu^2 \mapsto \mu^2 + \lambda$. 
The spectral type at $z=0$ is again $(1111)$, again almost everywhere in the space of available parameters appearing in our model.

Also in this case, at $z=1$ the so-called resonant case\cite{Haraoka} is verified. Let us define $y:=z-1$. Then, the characteristic equation for solutions of the form
\begin{align}\label{Eq_FormalSolution_1_psi}
f(y) = y^{a} \sum_{n=0}^{\infty} c_n y^n 
\end{align}
has four integer solutions $a=0,1,2,3$. 
 This situation, again, requires 
	a particular study, that we perform by means of the Frobenius method.  
	
	We can show that there exist three independent integer solutions

		\begin{align}\label{Eq_localsol_1_psi}
		u_1(y) &= y^3+y^4 \left[ -\frac{3}{2}-\frac{\lambda}{48 \tilde \beta^2 \epsilon^2 \gamma^2 v^2}-i\frac{\omega}{4\tilde \beta v}\right]  + o(y^4)\\
		u_2(y) &= y^2 +y^3 \left[ -2-\frac{\lambda}{24 \tilde \beta^2 \epsilon^2 \gamma^2 v^2}-i\frac{\omega}{3\tilde \beta v}\right] + o(y^3) \\
		u_3(y) &= y\; +y^2 \left[-3-\frac{\lambda}{8 \tilde \beta^2 \epsilon^2 \gamma^2 v^2}-i\frac{\omega}{2 \tilde \beta v}\right] + o(y^2) \,,
		\end{align}
		and one logarithmic solution 
		\begin{align}\label{Eq_localsol_4_psi}
		&u_0(y) =1+y \left[-6+\frac{\lambda}{4 \tilde \beta^2 \epsilon^2 \gamma^2 v^2}-i\frac{\omega}{\tilde \beta v}\right]+o(y)\cr 
		& +\log (y) \Big(R_1 u_1(y)+R_2 u_2(y)+R_3 u_3(y)\Big) \,,
		\end{align} 
		
		where 
		{\small 
			\begin{align}
			R_3 &= -\frac{\lambda}{4 \tilde \beta^2 \epsilon^2 \gamma^2 v^2} \\
			R_2 &= -\frac{5\lambda}{16\tilde \beta^2 \epsilon^2 \gamma^2 v^2} \\
			R_1 &= -\frac{\lambda \left(36\tilde \beta^2 c^2+ \omega^2\right)}{288\tilde \beta^4 c^2 \epsilon^2 \gamma^2 v^2} \,.
			\end{align}}

	\subsection{Monodromy and rigidity}

	The monodromy matrix of the solutions $(u_0(y),u_1(y),u_2(y),u_3(y))$ at $z=1$ is easily computed as
	\begin{align}
	M_1 = \left(
	\begin{array}{cccc}
	1 & 2 \pi i  R_1 & 2  \pi i R_2 & 2 \pi i R_3 \\
	0 & 1 & 0 & 0 \\
	0 & 0 & 1 & 0 \\
	0 & 0 & 0 & 1 \\
	\end{array}
	\right) \,,
	\end{align}
	whose Jordan form is 
	\begin{align}
	J_{M_1} = \left(
	\begin{array}{cccc}
	1 & 0 & 0 & 0 \\
	0 & 1 & 0 & 0 \\
	0 & 0 & 1 & 1 \\
	0 & 0 & 0 & 1 \\
	\end{array}
	\right)\,.
	\end{align}
	The monodromy at $z=0$ and $z=\infty$ are, also in this case, represented respectively by the diagonal matrices
	\begin{align}
	M_0 &= \text{diag}\left(e^{i2\pi \tilde \alpha_1},e^{i2\pi \tilde \alpha_2},e^{i2\pi \tilde \alpha_3},e^{i2\pi \tilde \alpha_4}\right)\,,\\
	M_\infty &= \text{diag}\left(e^{-i2\pi  \alpha_1},e^{-i2\pi  \alpha_2},e^{-i2\pi \alpha_3},e^{-i2\pi \alpha_4}\right)\,,
	\end{align}
	whose Jordan form is $J_{M_0}=J_{M_\infty}= \mathbb I_4$.
	As a consequence, the spectral type of the equation at $z=1$ is $(3,1)$, and  
	the spectral type of the equation is $((1111),(31),(1111))$, i.e. it is rigid as in the case discussed in the previous section.

\subsection{Gauge Transformation and Riemann scheme}

As in the previous sections for the simplified model, we can obtain a solution of the form
\begin{align}
f(z):= z^{i \frac {\tilde  k_1}{2 \tilde \beta}}(z-1)\, u(z)\,,
\end{align}
where $\tilde k_1$ satisfies $$\text{DR}_0(\tilde k_1)=0\,.$$  The function $u(z)$ now satisfies an equation which is analog to 
\eqref{Eq_FuchsianEQ_gauge2}, which we avoid to write explicitly, as it is quite long. What happens is that one may verify that for the 
solution $u(z)$, also in the case of the standard $\phi\psi$-model, the same equation \eqref{Eq_FuchsianEQ_DR} holds true, where now the 
dispersion relations are the ones in \eqref{Eq_CharEq_inf_psi} and in \eqref{Eq_CharEq_zero_psi} respectively. Also the so-called Riemann Scheme of the equation 
is the same as in \eqref{Eq_Monotonic_RiemannScheme}. 
Also in this case, by letting 
\begin{align}
\alpha_i &:= 1-\frac i {2\tilde \beta}( k_i - \tilde k_1) \,,\quad i=1,2,3,4\\ 
\beta_j &:= 1 - \frac i {2\tilde \beta}(\tilde k_{j+1} - \tilde k_1)\,,\quad j=1,2,3\,,
\end{align}
we can write Eq. (\ref{Eq_Monotonic_RiemannScheme}) as
\begin{align}
\left[\begin{matrix}
z=0 & z=1 & z=\infty \\
0	& 0	& \alpha_1 \\
1-\beta_1	& 1	& \alpha_2 \\
1-\beta_2	& 2	& \alpha_3 \\
1-\beta_3 & -\beta_4	& \alpha_4
\end{matrix}\right]\,.
\end{align}
which, as discussed in the previous sections, corresponds to  the Riemann scheme of the hypergeometric function $$_4F_3(\alpha_1,\alpha_2,\alpha_3,\alpha_4;\beta_1,\beta_2,\beta_3;z)$$ in the standard form.

\subsection{The Hawking temperature}
The scattering coefficients $C_j$ for the Hopfield model have the same form as the ones of the previous Sections, a part for the different expressions of the momenta $k_j(\omega)$ and $\tilde k_j(\omega) $. As in the previous section, one is able to deal with analytical expressions for the momenta only in suitable limits, and this is the strategy that is adopted also in the present case. 
 
In the low-$\omega$ limit, the right-infinity modes are
\begin{align}
k_H &= \frac{ \chi_0 \gamma^2 v+\sqrt{c^2 (1+\chi_0)}}{c^2-\chi_0
	\gamma^2 v^2}\,\omega  + o(\omega)\\
k_B &= \frac{\chi_0 \gamma^2 v-\sqrt{c^2(1+\chi_0)}}{c^2-\chi_0
	\gamma^2 v^2}\,\omega + o(\omega) \\
k_P&= \frac{\sqrt{c^2-\chi_0 \gamma^2 v^2}}{c \sqrt{\chi_0} {\epsilon} \gamma v}+\frac{c^2 \omega}{v \left(\chi_0 \gamma^2
	v^2-c^2\right)} + o(\omega) \\
k_N&=-\frac{\sqrt{c^2-\chi_0 \gamma^2 v^2}}{c \sqrt{\chi_0} {\epsilon}
	\gamma v} + \frac{c^2 \omega}{v \left(\chi_0 \gamma^2 v^2-c^2\right)} + o(\omega) \,,
\end{align}
while the left-infinity modes are given by
\begin{align}
\tilde k_H &= \frac{ \chi_0 \gamma^2 v + \sqrt{ c^2 \left(1 + \chi_0 + \chi_0^2 (\lambda-1) \lambda-2 \chi_0 \lambda\right)}}{c^2-\chi_0 \gamma^2 v^2 - c^2 \chi_0
	\lambda}\,\omega + o(\omega)\\
\tilde k_B &= \frac{ \chi_0 \gamma^2 v - \sqrt{ c^2 \left(1 + \chi_0 + \chi_0^2 (\lambda-1) \lambda-2 \chi_0 \lambda\right)}}{c^2-\chi_0 \gamma^2 v^2 - c^2 \chi_0
	\lambda}\,\omega + o(\omega) \\
\tilde k_P &= +\frac{ \sqrt{\chi_0(\lambda_{\text{crit}} -\lambda)}}{  {\epsilon} \gamma v} - \frac{ \left(1- \chi_0 \lambda\right)\,\omega}{v  \chi_0\left(\lambda_{\text{crit}}-\lambda\right)}\\
\tilde k_N &= -\frac{ \sqrt{\chi_0(\lambda_{\text{crit}} -\lambda)}}{  {\epsilon} \gamma v} - \frac{ \left(1- \chi_0 \lambda\right)\,\omega}{v  \chi_0\left(\lambda_{\text{crit}}-\lambda\right)}  \\
\end{align}
The transition between subcritical and transcritical regime happens at $\lambda=\lambda_{\text{crit}}$, which in this case is
\begin{align}
\lambda_{\text{crit}}= \frac{1-\chi_0 \gamma^2 \frac{v^2}{c^2}}{\chi_0}\,.
\end{align}
 We assume the condition $\lambda_{\text{crit}}>0$.

As we did in Sections \ref{sec_Subcritical} and \ref{sec_transcritical}, we expand the factor $\frac{|P|}{|N|}$ in order to find an expression of the Hawking temperature. The flux factors that appear in the expressions of $|P|$ and $|N|$ (see Eq. (\ref{Eq_Monotonic_Nformal})) for the Hopfield model are given by
\begin{align}\label{Eq_smartfluxes2}
v_i(\omega)\partial_\omega  \text{DR}|_{k_i}= -\frac{\gamma^2 V^4}{\omega^2- k_i(\omega)^2}  \prod_{j\neq i} (k_i(\omega) - k_j(\omega))\,.
\end{align} 

\subsubsection*{Subcritical case}
In the subcritical case, expanding for low-$\omega$ and low-$\lambda$ we find
\begin{align}
\log\left(\frac{|P|}{|N|}\right) = \left(\beta_{sub} +o(\lambda)\right)\omega + o(\omega)\,,
\end{align}
where 
\begin{align}\label{eq_BetaSub}
\beta_{sub} = \frac{2 \left(\pi  \left(1-v^2 (\text{$\chi_0$}+1)\right) \left(\gamma ^2 v^2 \text{$\chi_0 $}+v \sqrt{\text{$\chi_0 $}+1}+1\right) \coth
	\left(\frac{\pi  \gamma  \sqrt{1-v^2 (\text{$\chi_0 $}+1)}}{2 \beta  v \sqrt{\text{$\chi_0 $}} \epsilon }\right)+2 \beta  \gamma  v^3
	\text{$\chi_0 $}^{3/2} \epsilon  \sqrt{1-v^2 (\text{$\chi_0 $}+1)}\right)}{\beta  \gamma  v \left((\chi_0+1) v^2-1\right)^2}
\end{align}
Once again, for $\beta \sim 0$ we find 
\begin{align}
\beta_{sub}\approx \frac{2 \pi  \left(\gamma ^2 v^2 \text{$\chi_0 $}+v \sqrt{\text{$\chi_0 $}+1}+1\right)}{\beta  \gamma  v \left(1-v^2 (\text{$\chi_0 $}+1)\right)}\,,
\end{align}
which is the same result that was found in \cite{PaperoVariazionale} with a perturbative approach. Notice that, for this model, taking the limit $\beta\sim 0$ in (\ref{eq_BetaSub}) is the same as taking the limit $\epsilon\sim 0$: this corresponds to the weak dispersion limit, a situation that is often studied in literature.

We can expand $\log{\frac{|P|}{|N|}}$ for $\lambda \lesssim\lambda_{\text{crit}}$, we find
\begin{align}
\log{\frac{|P|}{|N|}} = \left(\frac{A}{\frac{\lambda_{crit}}{\lambda} - 1} + B + O\left(\frac{\lambda_{crit}}{\lambda}-1\right) \right)\omega + o(\omega)\,,
\end{align}
with determined factors $A$ e $B$. From this expansion we deduce the temperature in the near critical regime is
\begin{align}
T_{nc}(\lambda) \approx \frac{\beta \left(1-v^2 (\text{$\chi_0 $}+1)\right) \tanh \left(\frac{\pi  \gamma  \sqrt{1-v^2 (\text{$\chi_0 $}+1)}}{2 \beta  v
		\sqrt{\text{$\chi_0 $}} \epsilon }\right)}{4 \pi  \gamma  v \text{$\chi_0$}}\left(\frac{\lambda_{crit}}{\lambda}-1\right)\,.
\end{align}
In particular,  notice as before that
$$ T(\lambda=\lambda_{\text{crit}})=0\,. $$
The same considerations as in the previous section hold true about this point.

\subsubsection*{Transcritical case}
The transcritical regime is reached for $\lambda >\lambda_{\text{crit}}$. In this case, the low-$\omega$ expansion gives a much simpler result,
\begin{align}
\log{\frac{|P|}{|N|}} = \beta(\lambda) \omega + o(\omega) \,,
\end{align} 
where
\begin{align}
\beta(\lambda) = \frac{2 \pi  \gamma\,  v \chi_0 \coth \left(\frac{\pi  \gamma  \sqrt{1-v^2 (\chi_0+1)}}{2 \beta  v \sqrt{\chi_0}
		\epsilon }\right)}{\beta   \left(1-v^2 (\chi_0+1)\right)(1-\frac{\lambda_{\text{crit}}}{\lambda} ) }\,.
\end{align}
Thus, the Haking temperature in transcritical regime is
\begin{align}
\label{temp-cr}
T_c(\lambda)=\frac{\beta   \left(1-v^2 (\chi_0+1)\right) \tanh \left(\frac{\pi  \gamma  \sqrt{1-v^2 (\chi_0+1)}}{2 \beta  v
		\sqrt{\chi_0} \epsilon }\right)}{2 \pi  \gamma  v \chi_0}\left( 1- \frac{\lambda_{\text{crit}}}{\lambda}\right)
\end{align}
In the limit $\lambda\gg\lambda_{\text{crit}}$  (see also Appendix \ref{refractive-sec}), we reach the limit temperature
\begin{align}
T_H = 	\frac{\beta   \left(1-v^2 (\chi_0+1)\right) \tanh \left(\frac{\pi  \gamma  \sqrt{1-v^2 (\chi_0+1)}}{2 \beta  v
		\sqrt{\chi_0} \epsilon }\right)}{2 \pi  \gamma  v \chi_0}\,.
\end{align}
This result, for $\beta\sim 0$, again coincides with what was found in \cite{PaperoVariazionale} using the Orr-Sommerfeld approach. 
 It is also to be stressed, as for the subcritical case, that if, in place of $\beta\sim 0$, one considers in the last equation $\epsilon \sim 0$, i.e. the usual weak dispersion limit which 
is commonly adopted in the literature on the dispersive analog Hawking effect, we get the same result. In order to provide a more extensive comparison with the Orr-Sommerfeld 
approach and some more insights, in the 
following subsection we sketch the basic calculations involved.

\subsection{The transcritical case in the Orr-Sommerfeld picture} \label{orso-sec}

The separated equation of motion for the spatial part of the polarization field $h (x)$, has been displayed in \eqref{eq-for-h}, 
where, in the present case, the specific profile \eqref{profile-os} is understood. 
We eliminate, as usual \cite{master}, the third order term by putting $h (x)=\exp(-2 i \frac{\omega}{v} x) f(x)$. Then we obtain the following equation:
\beq
\epsilon^2 f^{(4)} (x)+\sum_{i=0}^2 p_{3-i} (x,\epsilon)  f^{(2-i)} (x)=0,
\eeq
where the coefficients $p_i (x,\epsilon)$ are, in the Orr-Sommerfeld approach, analytic functions in $\epsilon$: 
$$p_i (x,\epsilon)=\sum_{n=0}^\infty \epsilon^n p_{i n} (x).$$ 
A real turning point $x=x_{tp}$, i.e. an horizon, is found when $p_{30}(x_{tp})=0$. See \cite{master} and references therein, with particular focus on the papers by Nishimoto. We get
\beq
p_3 (x,\epsilon)=\left[\left(1-\chi_0 \gamma^2 \frac{v^2}{c^2}\right)-\frac{\lambda}{2 \gamma^2 v^2} (1-\tanh (\tilde \beta x))\right]+\epsilon^2 \omega^2  \frac{1}{2 v^2}(1+2 \frac{v^2}{c^2}).
\eeq
We can easily identify the turning point by solving $p_{30}(x=x_{tp})=0$.We find
\beq
\tilde \beta x_{tp}= \taninv\left(1- 2\frac{\lambda_{\text{crit}}}{\lambda}\right)
\eeq
It interesting to notice that, by assuming $\lambda$ and $\lambda_{\text{crit}}$ both positive, as in our previous analysis, the condition $\lambda > \lambda_{\text{crit}}$ amounts to the 
reality of the critical point at hand, i.e. to $(1- 2\frac{\lambda_{\text{crit}}}{\lambda})\in(-1,1)$. This is in agreement with the assumption of transcritical case and with the interpretation of $\lambda_{\text{crit}}$. 
We also have \cite{master}
\beq
T_H=\frac{\gamma^2 v^2 |n'(x_{tp})|}{2\pi},
\eeq
where the refractive index is given by $n(x)=\sqrt{1+\chi(x)}$. 
Then, after restoring the parameter $\beta =\gamma \tilde \beta$,  we obtain 
\beq
T_H=\frac{\beta c^2 \left(1-(1+\chi_0)  \frac{v^2}{c^2}\right)\left( 1- \frac{\lambda_{\text{crit}}}{\lambda}\right)}{2 \pi v \gamma \chi_0}, 
\eeq
which is in perfect agreement with \eqref{temp-cr} if one considers the limit as $\epsilon \to 0$ in \eqref{temp-cr}, because, trivially, the 
contribution of the factor involving the hyperbolic tangent goes to 1 in that limit.

\section{Conclusions}\label{conclusions}

In the framework of the analog Hawking effect in dielectric media, we have taken into account both the Cauchy model, which has the characteristic to be as simple as possible, and the original $\phi\psi$-model, with the explicit aim to find out exact solutions for the scattering problem for a suitable but physically meaningful monotone profile for the dielectric refractive index perturbation. On the one hand, this has required to embed the physical problem, from a mathematical point of view,  in the framework of Fuchsian equations on the Riemann sphere. We have first 
introduced the complex variable $z$, and obtained a fourth order equation displaying three regular singular points $z=0,1,\infty$. We have determined the monodromy properties of the solutions 
near the aforementioned singular points, and also found that our equations satisfies the so-called rigidity properties, which have eventually allowed us to conclude that exact global solutions 
are available and involve the generalized hypergeometric function  $ \,_4F_3$. For this hypergeometric function, a study of the Mellin-Barnes integral representation has allowed us to reach two 
fundamental goals: a complete analysis of the Stokes phenomenon and also a complete set of connection formulas, which are at the root of the description of the $S$-matrix for the 
scattering process associated with the analog Hawking effect.\\
On the other hand, we have taken into account some fundamental physical problems, which, of course, involve, as a focal point, the determination of the analog Hawking temperature. 
This part of the analysis has required some approximations, as fully analytical calculations are hard to be managed successfully. In particular, for the asymptotic expressions 
of the momenta of the modes involved in the scattering, we have adopted an expansion for low frequencies $\omega$, which is still standard in analytical calculations in literature. 
We have also considered both the subcritical regime and the transcritical one, and found explicit expressions for the Hawking temperature which are compatible both with 
the ones obtained in a perturbative framework in \cite{PaperoVariazionale} and, in the limit of weak dispersive effects, in \cite{master}. The aforementioned analysis, form a physical point 
of view, is just a very interesting but still incomplete one, as other regimes (beyond the low frequency one) can be investigated, and further amplitudes can be calculated, for a complete 
description of the full scattering matrix involved in the problem. We deserve a depeening and an extension of our study to future investigations.

%%%%%%%%%%%%%%%%%%%%%%%%%%%%%%%%%%%%%%%%%%%%%%%%%
\section*{Acknowledgements}
The authors are grateful to professor Yoshishige Haraoka for his precious advice and insight in  the mathematical aspects that are related to this work, especially concernng rigid Fuchsian equations.

\appendix

%%%%%%%%%%%%%%%%%%%%%%%%%%%%%%%%%%%%%%%%%%%%%%%
\section{Useful relations} \label{useful-relations}
We write (\ref{Eq_CharEq_inf}) and (\ref{Eq_CharEq_zero}) as
\begin{align} \label{Eq_DR_prod}
\text{DR}(k)&= \gamma^4 V^4 (k-  k_1)(k- k_2)(k-k_3)(k- k_4) \,, \\ \label{Eq_DR0_prod}
\text{DR}_0(k) &= \gamma^4 V^4 (k-\tilde k_1)(k-\tilde k_2)(k-\tilde k_3)(k-\tilde k_4) \,,
\end{align} 
where $k_j$ is a solution of $\text{DR}(k)=0$ and $\tilde k_j$ is a solutions of $\text{DR}_0(\tilde k)=0$.
By confronting (\ref{Eq_CharEq_inf}) with (\ref{Eq_DR_prod}) we deduce the following useful relations,
\begin{align}
 \frac 1 {(2\beta)^4}  k_1  k_2  k_3  k_4 &= \frac{\Omega^2 \left(-G^2 V^2+M^2+\Omega^2 \gamma^2\right)}{V^4 \gamma^2}\,, \\
 \frac 1 {(2\beta)^3} \sum_{i\neq j \neq l}  k_i  k_j  k_l  &= -\frac{2 \Omega \left(-G^2+M^2+2 \Omega^2 \gamma^2\right)}{V^3 \gamma^2}\,,\\
\frac 1 {(2\beta)^2} \sum_{i\neq j }  k_i k_j   &= \frac{V^2 \left(M^2+6 \Omega^2 \gamma^2\right)-G^2}{V^4 \gamma^2}\,, \\
 \frac 1 {(2\beta)}\sum_{i}  k_i  & = -\frac{4 \Omega}{V} \,, \label{Eq_sumK}
\end{align}
and similarly from (\ref{Eq_CharEq_zero}) and (\ref{Eq_DR0_prod})
\begin{align}
 \frac 1 {(2\beta)^4}\tilde k_1 \tilde k_2 \tilde k_3 \tilde k_4 &= \frac{\Omega^2 \left(-G^2 V^2+\Lambda+M^2+\Omega^2 \gamma^2\right)}{V^4 \gamma^2}\,,\\
\frac 1 {(2\beta)^3} \sum_{i\neq j \neq l} \tilde k_i \tilde k_j \tilde k_l  &= -\frac{2 \Omega \left(-G^2+\Lambda+M^2+2 \Omega^2 \gamma^2\right)}{V^3 \gamma^2}\,,\\
\frac 1 {(2\beta)^2} \sum_{i\neq j } \tilde k_i \tilde k_j   &= \frac{V^2 \left(\Lambda+M^2+6 \Omega^2 \gamma^2\right)-G^2}{V^4 \gamma^2}\,, \\
 \frac 1 {(2\beta)} \sum_{i} \tilde k_i   &= -\frac{4 \Omega}{V} \label{Eq_sumJ} \,.
\end{align}

%%%%%%%%%%%%%%%%%%%%%%%%%%%%%%%%%%%%%%%%%%%%%%%%%%%%%%%%%%%%%%%%%%%

\section{Proof of Proposition 1} \label{proposition-proof}
We prove the Theorem by induction. It is easy to verify that Eq. (\ref{Eq_Cn}) holds for $n=1,2,3,4$: indeed, substituting (\ref{Eq_SolutionExpansion}) into (\ref{Eq_FuchsianEQ_DR}) and truncating at order $4$ (using $\text{DR}_0(\tilde k_1)=0$) gives
	\begin{align}
	0=&z (-c_1 \text{DR}_0(\tilde k_1-i)+\text{DR}(\tilde k_1-i)) \cr 
	+&z^2 \big(c_1
	\text{DR}(\tilde k_1-2 i)+3 c_1 \text{DR}_0(\tilde k_1-i)-c_2 \text{DR}_0(\tilde k_1-2 i)-3 \text{DR}(\tilde k_1-i)\big) 
	\cr +&z^3 \Big(-3 c_1 (\text{DR}(\tilde k_1-2 i)+\text{DR}_0(\tilde k_1-i))+c_2 \text{DR}(\tilde k_1-3 i) \cr 
	&+3 c_2
	\text{DR}_0(\tilde k_1-2 i)-c_3 \text{DR}_0(\tilde k_1-3 i)+3 \text{DR}(\tilde k_1-i)\Big)\cr 
	+&z^4 \Big(3 c_1
	\text{DR}(\tilde k_1-2 i)+c_1 \text{DR}_0(\tilde k_1-i)-3 c_2 (\text{DR}(\tilde k_1-3 i)+\text{DR}_0(\tilde k_1-2 i))\cr 
	&+c_3
	\text{DR}(\tilde k_1-4 i)+3 c_3 \text{DR}_0(\tilde k_1-3 i)-c_4 \text{DR}_0(\tilde k_1-4 i)-\text{DR}(\tilde k_1-i)\Big)+O\left(z^5\right)\,,\cr
	\end{align}
	from which one can compute $c_1$, ..., $c_4$ by annihilating the coefficient of  each order. 
	Even though it is not necessary for the sake of the proof, we can verify (\ref{Eq_Cn}) also for some further $n$, by using the following identity, that is true for any fourth order polynomial\footnotemark $P_4(k)$,  \footnotetext{The identity holds more generally for any polynomial of order $k$, $$\sum_{m=0}^n \begin{pmatrix}n \\ m\end{pmatrix}(-1)^m P_k(m)=0 \,, \quad n > k\,,$$ and it derives immediately from the following property of binomial coefficients: $$ \sum_{m=0}^n \begin{pmatrix}n \\ m\end{pmatrix}(-1)^m m^k=0 \,, \quad n > k\,. $$}
	\begin{align}\label{Eq_RelationPolyn}
	0 = \sum_{m=0}^n \begin{pmatrix}n \\ m\end{pmatrix}(-1)^m P_4(-im) \,, \quad n > 4\,.
	\end{align} 
	The identity (\ref{Eq_RelationPolyn}) allows to write $\text{DR}(\tilde k_1-in)$ ($n\geq 5$) as a linear combination of $\text{DR}(\tilde k_1-i)$, $\text{DR}(\tilde k_1-2i)$, $\text{DR}(\tilde k_1-3i)$ and $\text{DR}(\tilde k_1-4i)$, and similarly  for $\text{DR}_0$.\\
	Now, for $n$ generic, assume that $c_n$, $c_{n+1}$, $c_{n+2}$, $c_{n+3}$ satisfy (\ref{Eq_Cn}). Take any two fourth degree polynomials
	\begin{align*}
	\text{DR}(k)&= a_0  + a_1 k + a_2 k^2 +a_3 k^3 +a_4 k^4 \,,\\
	\text{DR}_0(k)&= b_0  + b_1 k + b_2 k^2 +b_3 k^3 +b_4 k^4\,.
	\end{align*}
	Substituting (\ref{Eq_SolutionExpansion}) into (\ref{Eq_FuchsianEQ_DR}), we find that the coefficient $c_{n+4}$ satisfies the recurrence relation
	{\small 
		\begin{align}
		A_0\, c_{n} + A_1\, c_{n+1} + A_2\, c_{n+2}+ A_3\, c_{n+3}+A_4\, c_{n+4}=0\,,\label{Eq_Recurrence}
		\end{align}
		where
		\begin{align}
		&A_0 = \left(b_0-i b_1 (n+1)-b_2 (n+1)^2+i b_3 (n+1)^3+b_4 (n+1)^4\right)\cr 
		&+ k_1 \left(b_1-2 i b_2
		(n+1)-3 b_3 (n+1)^2+4 i b_4 (n+1)^3\right)\cr 
		& + k_1^2 \left(b_2-3 i b_3 (n+1)-6 b_4 (n+1)^2\right)\cr 
		&+
		k_1^3 (b_3-4 i b_4 (n+1))+ b_4 k_1^4 \,,
		\end{align}
		\begin{align}
		&A_1 = - \Big(a_0-i a_1 (n+1)-a_2 n^2-2 a_2 n-a_2+i a_3 n^3+3 i a_3 n^2+3 i a_3 n+i
		a_3+a_4 n^4\cr 
		&+4 a_4 n^3+6 a_4 n^2+4 a_4 n+a_4+3 b_0-3 i b_1 n-6 i b_1-3 b_2 n^2-12
		b_2 n-12 b_2+3 i b_3 n^3\cr 
		&+18 i b_3 n^2+36 i b_3 n+24 i b_3+3 b_4 n^4+24 b_4 n^3+72 b_4
		n^2+96 b_4 n+48 b_4\Big)\cr 
		&- k_1 \Big(a_1-2 i a_2 n-2 i a_2-3 a_3 n^2-6 a_3 n-3
		a_3+4 i a_4 n^3+12 i a_4 n^2+12 i a_4 n+4 i a_4\cr 
		&+3 b_1-6 i b_2 n-12 i b_2-9 b_3 n^2-36
		b_3 n-36 b_3+12 i b_4 n^3+72 i b_4 n^2+144 i b_4 n+96 i b_4\Big)\cr &- k_1^2 \left(a_2-3 i
		a_3 n-3 i a_3-6 a_4 n^2-12 a_4 n-6 a_4+3 b_2-9 i b_3 n-18 i b_3-18 b_4 n^2-72
		b_4 n-72 b_4\right)\cr 
		&- k_1^3 (a_3-4 i a_4 n-4 i a_4+3 b_3-12 i b_4 n-24 i b_4)-
		k_1^4 (a_4+3 b_4)\,,
		\end{align}
		\begin{align}
		&A_2= 3 \Big(a_0-i a_1 (n+2)-a_2 n^2-4 a_2 n-4 a_2+i a_3 n^3+6 i a_3 n^2+12 i a_3 n+8 i
		a_3+a_4 n^4\cr 
		&+8 a_4 n^3+24 a_4 n^2+32 a_4 n+16 a_4+b_0-i b_1 n-3 i b_1-b_2 n^2-6
		b_2 n-9 b_2+i b_3 n^3+9 i b_3 n^2 \cr 
		&+27 i b_3 n+27 i b_3+b_4 n^4+12 b_4 n^3+54 b_4
		n^2+108 b_4 n+81 b_4\Big)\cr 
		&+3 k_1 \Big(a_1-2 i a_2 n-4 i a_2-3 a_3 n^2-12 a_3 n-12
		a_3+4 i a_4 n^3+24 i a_4 n^2+48 i a_4 n\cr 
		&+32 i a_4+b_1-2 i b_2 n-6 i b_2-3 b_3 n^2-18
		b_3 n-27 b_3+4 i b_4 n^3+36 i b_4 n^2+108 i b_4 n+108 i b_4\Big)\cr 
		&+3 k_1^2 \Big(a_2-3 i
		a_3 n-6 i a_3-6 a_4 n^2-24 a_4 n-24 a_4+b_2-3 i b_3 n-9 i b_3-6 b_4 n^2-36 b_4
		n-54 b_4\Big)\cr 
		&+3 k_1^3 (a_3-4 i a_4 n-8 i a_4+b_3-4 i b_4 n-12 i b_4)+3 k_1^4
		(a_4+b_4)\,,
		\end{align}
		\begin{align}
		&A_3 =-3 \Big(3 a_0-3 i a_1 (n+3)-3 a_2 n^2-18 a_2 n-27 a_2+3 i a_3 n^3+27 i a_3 n^2+81 i a_3 n+81
		i a_3\cr 
		&+3 a_4 n^4+36 a_4 n^3+162 a_4 n^2+324 a_4 n+243 a_4+b_0-i b_1 n-4 i b_1-b_2
		n^2-8 b_2 n-16 b_2\cr 
		&+i b_3 n^3+12 i b_3 n^2+48 i b_3 n+64 i b_3+b_4 n^4+16 b_4 n^3+96
		b_4 n^2+256 b_4 n+256 b_4\Big)\cr &- k_1 \Big(3 a_1-6 i a_2 n-18 i a_2-9 a_3 n^2-54
		a_3 n-81 a_3+12 i a_4 n^3+108 i a_4 n^2+324 i a_4 n\cr 
		&+324 i a_4+b_1-2 i b_2 n-8 i b_2-3
		b_3 n^2-24 b_3 n-48 b_3+4 i b_4 n^3+48 i b_4 n^2+192 i b_4 n+256 i b_4\Big)\cr 
		&- k_1^2
		\Big(3 a_2-9 i a_3 n-27 i a_3-18 a_4 n^2-108 a_4 n-162 a_4+b_2-3 i b_3 n-12 i b_3\cr &-6
		b_4 n^2-48 b_4 n-96 b_4\Big)\cr 
		&- k_1^3 (3 a_3-12 i a_4 n-36 i a_4+b_3-4 i b_4 n-16 i
		b_4)- k_1^4 (3 a_4+b_4) \,,
		\end{align}
		\begin{align}
		& A_4 =  \left(a_0-i a_1 (n+4)-a_2 (n+4)^2+i a_3 (n+4)^3+a_4 (n+4)^4\right)\cr &+ k_1 \left(a_1-2 i a_2
		(n+4)-3 a_3 (n+4)^2+4 i a_4 (n+4)^3\right)\cr &+ k_1^2 \left(a_2-3 i a_3 (n+4)-6 a_4 (n+4)^2\right)+
		k_1^3 (a_3-4 i a_4 (n+4))+ a_4 k_1^4 \,.
		\end{align}
		Substituting the expressions for $c_{n}$, ...,  $c_{n+3}$ into Eq. (\ref{Eq_Recurrence}), we find that $c_{n+4}$ satisfies (\ref{Eq_Cn}) if and only if
		\begin{align}
		&A_0 \,\text{DR}_0(\tilde k_1-i (n+1)) \text{DR}_0(\tilde k_1-i (n+2)) \text{DR}_0(\tilde k_1-i (n+3)) \text{DR}_0(\tilde k_1-i (n+4))\cr&+A_1\,
		\text{DR}(\tilde k_1-i (n+1)) \text{DR}_0(\tilde k_1-i (n+2)) \text{DR}_0(\tilde k_1-i (n+3)) \text{DR}_0(\tilde k_1-i (n+4))\cr&+A_2\,
		\text{DR}(\tilde k_1-i (n+1)) \text{DR}(\tilde k_1-i (n+2)) \text{DR}_0(\tilde k_1-i (n+3)) \text{DR}_0(\tilde k_1-i (n+4))\cr&+A_3\,
		\text{DR}(\tilde k_1-i (n+1)) \text{DR}(\tilde k_1-i (n+2)) \text{DR}(\tilde k_1-i (n+3)) \text{DR}_0(\tilde k_1-i (n+4))\cr&+A_4\,
		\text{DR}(\tilde k_1-i (n+1)) \text{DR}(\tilde k_1-i (n+2)) \text{DR}(\tilde k_1-i (n+3)) \text{DR}(k_1-i (n+4))=0\,,\cr
		\end{align}}
	which indeed is true for any $n$, as can be checked by direct algebra or using Wolfram Mathematica. $\square$
%	: this can be verified by choosing any two polynomials
%	\begin{align*}
%	\text{DR}(k)&= a_0  + a_1 k + a_2 k^2 +a_3 k^3 +a_4 k^4 \,,\\
%	\text{DR}_0(k)&= b_0  + b_1 k + b_2 k^2 +b_3 k^3 +b_4 k^4\,,
%	\end{align*}
%	and doing the algebra or  computing it with Wolfram Mathematica.

%%%%%%%%%%%%%%%%%%%%%%%%%%%%%%%%%%%%%%%%%%%%%%%%%%%%%%%%%%%%%%%%%%%%%%%%%%%%

\section{Physical consequences of assumption  \eqref{profile-os}} \label{refractive-sec}

We define, as in the non-dispersive case and in the weakly dispersive one, the refractive index to be 
\beq
n(x)=\sqrt{\chi(x)+1}.
\label{ref-index}
\eeq
Also, it is useful to define $n_0^2:=\chi_0+1$. We can rewrite \eqref{profile-os} as follows:
\beq
n^2 (x)-1=\frac{n_0^2-1}{1-\frac{\lambda}{2}(n_0^2-1)(1-\tanh (\beta x))}.
\label{squax}
\eeq
We obtain
\begin{align}
\lim_{x\to+\infty} (n^2 (x)-1) &=n_0^2-1, \label{cond-inf}\\
\lim_{x\to-\infty} (n^2 (x)-1) &=\frac{n_0^2-1}{1-\lambda (n_0^2-1)}. \label{cond-lam}
\end{align}
In standard materials we expect $n^2>1$. As a consequence, \eqref{cond-inf} implies $n^2>1$ as $x\to +\infty$  for $n_0^2>1$. The same request leads to 
$1-\lambda (n_0^2-1)>0$, which means $\lambda <\lambda_{\text{sup}}$, where 
\beq
\lambda_{\text{sup}}:=\frac{1}{n_0^2-1}=\frac{1}{\chi_0}.
\eeq
We can also wonder if we are assuming a black hole condition (decreasing $n(x)$) or a white hole one (increasing $n(x)$). Cf. e.g. \cite{master}. 
Given our monotone profile, we find that, by assuming, as we did, $\lambda>0$ we obtain a black hole geometry. A white hole one would be allowed by a negative $\lambda$. 
It is also to be noted that it is possible to satisfy both $\lambda <\lambda_{\text{sup}}$ and $\lambda \gg \lambda_{\text{crit}}$, as in the discussion following 
\eqref{temp-cr}, indeed we have 
\beq
\lambda_{\text{crit}}=\lambda_{\text{sup}} \gamma^2 (1-n_0^2 \frac{v^2}{c^2}),
\eeq
so that, for $n_0^2$ very near $\frac{c^2}{v^2}-\delta$, for $0<\delta \ll 1$, we get  $\lambda_{\text{sup}} \gg \lambda_{\text{crit}}$, and then 
$\lambda_{\text{sup}}>\lambda \gg \lambda_{\text{crit}}$ is allowed.

\end{document}